\definecolor{webgreen}{rgb}{0,.5,0}
\definecolor{webblue}{rgb}{0,0,.5}
\newtheorem{lemma}{Lemma}[section]
\newtheorem{theorem}[lemma]{Theorem}
\newtheorem{corollary}[lemma]{Corollary}
\newtheorem{definition}[lemma]{Definition}
\newtheorem*{remark}{Remark}
\newtheorem*{theorem-informal}{Theorem (informal)}
\theoremstyle{definition}
\newcommand{\from}{\leftarrow}
\newcommand\Tr{{\mathop\textup{Tr}}}
\newcommand{\norm}[1]{\left\|\,#1\,\right\|}
\newcommand{\trnorm}[1]{\norm{#1}_{\mathrm {tr}}}
\newcommand{\ket}[1]{\left|#1\right\rangle}
\newcommand{\bra}[1]{\left\langle #1\right|}
\newcommand{\ketbra}[2]{\ket{#1}\!\bra{#2}}
\newcommand{\kb}[1]{\ketbra{#1}{#1}}
\newcommand{\X}{\ensuremath{\mathsf{X}}\xspace}
\newcommand{\Z}{\ensuremath{\mathsf{Z}}\xspace}
\newcommand{\J}{\ensuremath{\mathcal{J}}\xspace}
\newcommand{\Jr}{\ensuremath{\mathcal{J}_\mathsf{rec}}\xspace}
\newcommand{\Crec}{\ensuremath{C_\mathsf{rec}}\xspace}
\renewcommand{\O}{\ensuremath{\mathcal{O}}\xspace}
\newcommand{\F}{\ensuremath{\mathfrak{F}}\xspace}
\newcommand{\CNOT}{\ensuremath{\mathsf{CNOT}}\xspace}
\newcommand{\reg}[1]{{\color{gray}#1}}
\newcommand{\negl}[1]{\mathsf{negl}\left(#1\right)}
\newcommand{\poly}[1]{\mathsf{poly}\left(#1\right)}
\newcommand{\advA}{\ensuremath{\mathcal{A}}\xspace}
\newcommand{\simS}{\ensuremath{\mathcal{S}}\xspace}
\newcommand{\FHE}{\ensuremath{\mathsf{FHE}}\xspace}
\newcommand{\QFHE}{\ensuremath{\mathsf{QFHE}}\xspace}
\newcommand{\KeyGen}{\ensuremath{\mathsf{KeyGen}}\xspace}
\newcommand{\BlockGen}{\ensuremath{\mathsf{BlockGen}}\xspace}
\newcommand{\SubKeyGen}{\ensuremath{\mathsf{SubKeyGen}}\xspace}
\newcommand{\Assemble}{\ensuremath{\mathsf{Assemble}}\xspace}
\newcommand{\Enc}{\ensuremath{\mathsf{Enc}}\xspace}
\newcommand{\Eval}{\ensuremath{\mathsf{Eval}}\xspace}
\newcommand{\Dec}{\ensuremath{\mathsf{Dec}}\xspace}
\newcommand{\sk}{\ensuremath{\mathit{sk}}\xspace}
\newcommand{\pk}{\ensuremath{\mathit{pk}}\xspace}
\newcommand{\pf}[1]{\ensuremath{\mathbf{P}_{#1}}\xspace}
\newcommand{\mbpf}[2]{\ensuremath{\mathbf{P}_{#1,#2}}\xspace}
\newcommand{\cc}[2]{\ensuremath{\mathbf{CC}_{#1,#2}}\xspace}
\newcommand{\mbcc}[3]{\ensuremath{\mathbf{CC}_{#1,#2,#3}}\xspace}
\newcommand{\zf}[1]{\ensuremath{\mathbf{Z}_{#1}}\xspace}
\newcommand{\ccobf}[1]{\ensuremath{O_{\mathbf{CC}}(#1)}\xspace}
\newcommand{\capprox}{\overset{c}{\approx}}
\newcommand{\params}{\mathsf{params}}
\newcommand{\inp}{\mathsf{in}}
\newcommand{\cipha}{\widetilde{\alpha}}
\newcommand{\aux}{\mathsf{aux}}
\newcommand{\key}{\mathsf{key}}
\newcommand{\ctxt}{\mathsf{ctxt}}
\newcommand{\obf}{\mathsf{obf}}
\newcommand{\obfsk}{\mathit{o}_{\sk,\beta}}
\newcommand{\Cpoint}{\ensuremath{\mathcal{C}^{\mathsf{point}}}\xspace}
\newcommand{\Czero}{\ensuremath{\mathcal{C}^{\mathsf{zero}}}\xspace}
\newcommand{\Chpoint}{\ensuremath{\hat{\mathcal{C}}^{\mathsf{point}}}\xspace}
\newcommand{\Chzero}{\ensuremath{\hat{\mathcal{C}}^{\mathsf{zero}}}\xspace}
\newcommand{\Dpoint}{\ensuremath{\mathcal{D}^{\mathsf{point}}}\xspace}
\newcommand{\Dzero}{\ensuremath{\mathcal{D}^{\mathsf{zero}}}\xspace}
\newcommand{\LWEWZ}{LWE*}
\newcommand{\qobf}[1]{\ensuremath{O_Q(#1)}\xspace}
\newcommand{\intJ}{\ensuremath{\mathcal{J}}\xspace}
\title{Impossibility of Quantum Virtual Black-Box Obfuscation of Classical Circuits}
\author{Gorjan Alagic\footnote{QuICS, University of Maryland; National Institute of Standards and Technology} \and
	\setcounter{footnote}{2} %
	Zvika Brakerski\footnote{Weizmann Institute of Science, \texttt{zvika.brakerski@weizmann.ac.il}. } \and
	Yfke Dulek\footnote{QuSoft; University of Amsterdam,
          \texttt{yfkedulek@gmail.com, c.schaffner@uva.nl}} \and
	Christian Schaffner$^\fnsymbol{footnote}$}
\date{\vspace{-1cm}}
\begin{document}

\maketitle

\begin{abstract}
  Virtual black-box obfuscation is a strong cryptographic primitive: it encrypts a circuit while maintaining its full input/output functionality. A remarkable result by Barak et al. (Crypto 2001) shows that a general obfuscator that obfuscates classical circuits into classical circuits cannot exist. A promising direction that circumvents this impossibility result is to obfuscate classical circuits into quantum states, which would potentially be better capable of hiding information about the obfuscated circuit. We show that, under the assumption that Learning With Errors (LWE) is hard for quantum computers, this quantum variant of virtual black-box obfuscation of classical circuits is generally impossible. On the way, we show that under the presence of dependent classical auxiliary input, even the small class of classical point functions cannot be quantum virtual black-box obfuscated.
\end{abstract}

\section{Introduction}
The obfuscation of a circuit is an object, typically another circuit, that allows a user to evaluate the functionality of the original circuit without learning any additional information about the structure of the circuit. Obfuscation is useful for publishing software without revealing the code, but it also has more fundamental applications in cryptography. For example, the strongest notion called \emph{virtual black-box} obfuscation can transform any private-key encryption scheme into a public-key scheme, and transform public-key schemes into fully-homomorphic schemes. Unfortunately, this notion turns out to be impossible for general circuits~\cite{BGI+01} -- at least, if we require the obfuscation of a circuit to be a circuit itself.

The impossibility result from~\cite{BGI+01} leaves open an intriguing possibility: what if the obfuscation of a (classical) circuit is allowed to be a \emph{quantum state}? Could a quantum state contain all the information about a functionality, allowing a user to produce correct outputs, without revealing all that information? This possibility seems hopeful, due to the unrevealing nature of quantum states. However, in this work, we show that virtual-black-box obfuscating classical circuits into quantum states is not possible. 
\subsection{Related work}\label{sec:intro-related-work}
Barak et al.\ defined the obfuscating property of virtual black-box (vbb) obfuscators as follows: any information that an adversary can learn about a circuit from its obfuscation can also be learned by a simulator that does not have access to the obfuscation, but only to an oracle for the circuit's functionality~\cite{BGI+01}. In this definition, the crucial difference between the adversary and the simulator is that the adversary has access to a short representation of the circuit (namely, the obfuscation), whereas the simulator only has access to an input/output interface that implements the functionality. Some circuit classes allow the adversary to exploit this difference by using the obfuscation as an input value to the circuit itself. Those circuit classes are unobfuscatable in the vbb sense, rendering vbb obfuscation impossible for the general class of circuits in $\mathbf{P}$~\cite{BGI+01}.

In more detail, the impossibility proof in~\cite{BGI+01} relies on point functions, which output zero everywhere except at a single input value $\alpha$, where they output a string $\beta$. The circuits in the unobfuscatable class can, depending on the input, do all of the following: (1) apply that point function, (2) return an encryption of $\alpha$, (3) homomorphically evaluate a gate, or (4) check whether a ciphertext decrypts to $\beta$. An adversary holding the obfuscation is able to divide it into single gates, and can use those to homomorphically evaluate option (1), thereby converting a ciphertext for $\alpha$ into a ciphertext for $\beta$. That way, the adversary can tell whether he is holding an obfuscation with a point function from $\alpha$ to $\beta$, or one with the all-zero function. (In the second case, the homomorphic evaluation would yield a ciphertext for zero, rather than one for $\beta$.) A simulator, only having access to the input/output behavior, cannot perform the homomorphic evaluation, because it cannot divide the functionality into single gates.

The above construction rules out the existence of an obfuscator that maps classical circuits to classical circuits. It leaves open the possibility of an obfuscator that maps classical circuits to \emph{quantum states}: such a quantum state, together with a fixed public `interpreter map', could be used to evaluate the obfuscated circuit.
The possibility of quantum obfuscation was the object of study for Alagic and Fefferman~\cite{AF16}, who attempted to port the impossibility proof from~\cite{BGI+01} to the quantum setting. In doing so, they encountered two issues:
\begin{description}[style=unboxed,leftmargin=0.5cm]
	\item[Homomorphic evaluation.] The interpreter map, that runs the obfuscation state on a chosen input, is a quantum map. It will likely have quantum states as intermediate states of the computation, so in order to homomorphically run the point function, one needs the ability to evaluate quantum gates on quantum ciphertexts. The unobfuscatable circuit class will thus need to contain quantum circuits to perform homomorphic evaluation steps.
	\item[Reusability.] In the construction from~\cite{BGI+01}, the obfuscated circuit needs to be used multiple times: for example, each homomorphic gate evaluation requires a separate call to the obfuscated circuit. If the obfuscation is a (classical or quantum) circuit, this poses no problem, but if it is a quantum \emph{state}, multiple uses are not guaranteed.
\end{description}
These two issues limit the extent of the impossibility results in~\cite{AF16}: they show that it is impossible to vbb obfuscate \emph{quantum} circuits into \emph{reusable} obfuscated states (e.g., quantum circuits).

After it became clear~\cite{BGI+01} that obfuscating all classical circuits is impossible, efforts were made to construct obfuscators for smaller, but still nontrivial, classes of circuits. Successful constructions have been found for several classes of evasive functions, such as point functions~\cite{Wee05,CD08} and compute-and-compare functions~\cite{WZ17,GKW17}. Currently, no quantum obfuscators are known for circuit classes that cannot be classically obfuscated.

\subsection{Our contributions}\label{sec:intro-contributions}
We strengthen the impossibility of virtual-black-box obfuscation of classical circuits by showing that classical circuits cannot be obfuscated into quantum states. We assume the existence of classical-client quantum fully homomorphic encryption and classical obfuscation of compute-and-compare functions. Both of these can be constructed from the learning-with-errors (LWE) assumption~\cite{Mahadev18,Brakerski18,WZ17,GKW17}. The compute-and-compare construction requires the strongest assumption in terms of the LWE parameters.

\begin{theorem-informal}
	If LWE is hard for quantum algorithms, then it is impossible to quantum vbb obfuscate the class of polynomial-size classical circuits (even with non-negligible correctness and security error, and even if the obfuscation procedure is inefficient).
\end{theorem-informal}

Our result uses the same proof strategy as in~\cite{BGI+01} and~\cite{AF16}, overcoming the two main issues described above as follows:
\begin{description}[style=unboxed,leftmargin=0.5cm]
	\item[Homomorphic evaluation.] The constructions in~\cite{BGI+01} and ~\cite{AF16} rely on the obfuscator to implement the homomorphic evaluations, by obfuscating the functionality ``decrypt, then apply a gate, then re-encrypt". However, by now, we know how to build quantum fully-homomorphic encryption schemes directly~\cite{Mahadev18,Brakerski18}, based on the learning-with-errors (LWE) assumption. Thus, in our construction, we can remove the homomorphic gate evaluation from the obfuscated circuits: the adversary can do the homomorphic evaluation of the point function herself, using a quantum fully-homomorphic encryption scheme. With the homomorphic evaluation removed from it, the class of circuits that we prove impossible to obfuscate can remain classical.
	
	This solution introduces a slight complication: part of the functionality of the circuit we construct is now to return the public evaluation key. However, unless one is willing to make an assumption on the circular security of the homomorphic encryption, the size of this key (and therefore the size of the circuit) scales with the size of the circuit that needs to be homomorphically evaluated. To get rid of this inconvenient dependence, our unobfuscatable circuit returns the public key in small, individual blocks that can be independently computed. We argue that any classical-key quantum fully-homomorphic encryption scheme has public keys that can be decomposed in this way.
	
	\item[Reusability.] The circuits that we consider are classical and deterministic. Therefore, if the interpreter map is run on an obfuscation state $\rho$ for a circuit $C$, plus a classical input $x$, then by correctness, the result is (close to) a computational-basis state $\ket{C(x)}$. This output can be copied out to a separate wire without disturbing the state, and the interpreter map can be reversed, recovering the obfuscation $\rho$ to be used again. If the interpreter map is not unitary, then it can be run coherently (i.e., keeping purification registers around instead of measuring wires), and this coherent version can be reversed as long as the purification registers are not measured.
	
	At one point in our proof, we will need to run the interpreter map homomorphically on (an encryption of) $\rho$ and $x$. This may result in a superposition of different ciphertexts for $C(x)$, which cannot cleanly be copied out to a separate wire without entangling that wire with the output. Thus, recovering $\rho$ is not necessarily possible after the homomorphic-evaluation step.
	
	We circumvent this problem by making sure that the homomorphic evaluation occurs last, so that $\rho$ is not needed anymore afterwards. This reordering is achieved by classically obfuscating the part of the circuit that checks whether a ciphertext decrypts to the value $\beta$. That way, this functionality becomes a constant output value that a user can request and store before performing the homomorphic evaluation, and use afterwards. To obfuscate the decryption check, we use a classical vbb obfsucator for compute-and-compare functions, which relies on a variant of the LWE assumption~\cite{WZ17,GKW17}.
\end{description}
Our impossibility result compares to the classical impossibility result from~\cite{BGI+01} as follows.
First, as mentioned, we extend the realm of impossible obfuscators to include obfuscators that produce a quantum state, rather than a classical circuit.
Second, the impossibility result from~\cite{BGI+01} is unconditional, whereas we require the (standard) assumption that learning-with-errors is hard for quantum adversaries. It may be possible to relax this assumption if $\rho$ can be recovered after the homomorphic evaluation, see Section~\ref{sec:intro-open-questions} below.
Third, the class of classical circuits that cannot be obfuscated is slightly different: in our work, it does not have the homomorphic-evaluation functionality built into it, and is therefore arguably simpler, strengthening the impossibility result. However, we stress that in both works, the unobfuscatable circuit class itself is somewhat contrived: the main implication is that its superclass $\mathbf{P}$ is unobfuscatable.

As an intermediate result, we show that it is impossible to vbb obfuscate even just the class of classical multi-bit-output point functions into a quantum state, if the adversary and simulator have access to auxiliary classical information that contains an encryption of the non-zero input value $\alpha$ and a vbb obfuscation of a function depending on the secret key for that encryption.

\begin{theorem-informal}
	If LWE is hard for quantum algorithms, then it is impossible to quantum vbb obfuscate multi-bit-output point functions and the all-zero function under the presence of classical dependent auxiliary information (even with non-negligible soundness and security error).
\end{theorem-informal}

At first glance, that may seem to contradict the constructions in~\cite{WZ17,GKW17}, where vbb obfuscation for point functions is constructed, even in the presence of dependent auxiliary information. The crucial difference is that the~\cite{WZ17,GKW17} constructions only allow a limited dependency of the auxiliary information, whereas in our impossibility proof, the dependence is slightly stronger. This subtle difference seems to indicate that the gap between possibility and impossibility of vbb obfuscation is closing.

\paragraph{Comparison with Concurrent Work.} 
Independently of this work, Ananth and La Placa~\cite{AL20} have concurrently shown the general impossibility of quantum copy-protection, thereby also ruling out quantum obfuscation of classical circuits. Their techniques are very similar to ours, but their adversary is somewhat more powerful in the sense that it allows to completely de-obfuscate the program given non-black-box access. They also present some positive results in their work.
Their result relies on the same LWE assumption as ours, but in addition they require that the underlying homomorphic-encryption scheme is circularly secure. We avoid circularity by introducing a notion of decomposable public keys for homomorphic encryption. This technique could potentially be used to remove the circularity assumption from the copy-protection impossibility result~\cite{AL20} as well.

\subsection{Open questions}\label{sec:intro-open-questions}
The strongest assumption in our work is the existence of the classical vbb obfuscator for compute-and-compare functions, which relies on a variant of LWE. It is necessary because the QFHE evaluation may destroy the obfuscation state when the superposition of output ciphertexts is measured. However, it is not clear if this measurement actually destroys any information on the \emph{plaintext} level, since the plaintext value is deterministic. Thus, it may be possible to recover the (plaintext) obfuscation state after the QFHE evaluation. In that case, it is not necessary to classically obfuscate the compute-and-compare function: it can simply be part of the quantum-obfuscated functionality.

Other open questions are about possibilities rather than impossibilities. What circuit classes \emph{can} be vbb obfuscated into quantum states? Is quantum vbb obfuscation stronger than classical vbb obfuscation, in the sense that it can obfuscate circuit classes that classical vbb cannot? Also, the weaker notion of indistinguishability obfuscation (iO) (also introduced in~\cite{BGI+01}) is not affected by our impossibility result: it may still be possible to classically or quantumly iO obfuscate classical functionalities. Could such a construction be lifted into the quantum realm, so that we can (quantum) iO obfuscate quantum functionalities?~\cite{BK20}

\subsection{Structure of this work}
In \Cref{sec:preliminaries}, we give preliminary definitions of the relevant concepts for this work: (classical and quantum) obfuscation, quantum fully homomorphic encryption, and compute-and-compare functions. We also describe how the input of an (almost) deterministic quantum circuit can be recovered. In \Cref{sec:aux}, we prove impossibility of quantum obfuscation of point functions under dependent auxiliary input. Building on the concepts in that section, \Cref{sec:no-aux} proves our main result, impossibility of quantum obfuscation of classical circuits without any auxiliary input.

\section{Preliminaries}\label{sec:preliminaries}

\subsection{Notation}\label{sec:prelim-notation}
PPT stands for probabilistic polynomial-time algorithm, and QPT stands for quantum polynomial-time algorithm. If a classical or quantum algorithm $A$ has oracle access to a classical function $f$, we write $A^f$. If $A$ has access to multiple oracles with separate input/output interfaces, we write, e.g.,  $A^{f,g}$. Since our oracles will model or emulate evaluation of a known circuit, our quantum algorithms will always have superposition access to a classical oracle $f$. This amounts to oracle access to the unitary map $\ket{x}\ket{y} \mapsto \ket{x}\ket{y \oplus f(x)}$ where $\oplus$ is the bit-wise XOR operation.

Let $\poly{x}$ denote an unspecified polynomial $p(x)$. Similarly, let $\negl{x}$ denote an unspecified negligible function $\mu(x)$, i.e., for all constants $c \in \mathbb{N}$ there exists an $x_0 \in \mathbb{R}$ such that for all $x > x_0$, $|\mu(x)| < x^{-c}$. Let $\zf{n} :\{0,1\}^n \to \{0^n\}$ denote the all-zero function on $n$ input bits: $\zf{n}(x) = 0^n$ for all $x$.

If $D$ is a distribution, we write $x \from D$ to signify that $x$ is sampled according to $D$. For a finite set $S$, we write $x \from_R S$ to signify that $x$ is sampled uniformly at random from the set $S$. Two distribution ensembles $\{D_{\lambda}\}_{\lambda \in \mathbb{N}}$ and $\{D'_{\lambda}\}_{\lambda \in \mathbb{N}}$ are computationally indistinguishable, written $D_{\lambda} \capprox D'_{\lambda}$, if no poly-time algorithm can distinguish between a sample from one distribution or the other, i.e., for all PPT $A$,
\[
\left|
\Pr_{x \from D_{\lambda}}[A(x) = 1]
-
\Pr_{y \from D_{\lambda}'}[A(y) = 1]
\right|
\leq
\negl{\lambda}.
\]
We sometimes write $x \capprox y$ if it is clear from which distributions $x$ and $y$ are sampled. If not even a QPT algorithm can distinguish them, the distributions are quantum-computationally indistinguishable.

A pure quantum state is written $\ket{\psi}$ or $\ket{\varphi}$, and a mixed quantum state is usually denoted by $\rho$ or $\sigma$. As a special case, a computational-basis state is written $\ket{x}$ for some classical string $x \in \{0,1\}^*$. We sometimes abuse notation and give a classical input $x$ to a quantum algorithm $A$, writing $A(x)$: in that case, the algorithm $A$ is actually given $\ket{x}$ as input.

$\X$ and $\Z$ denote the bit-flip gate and phase-flip gate, respectively. If we write $\X^a$ for some $a \in \{0,1\}$, we mean that the gate $\X$ is applied if $a = 1$; otherwise, identity is applied.

Finally, for a mixed state $\rho$, let $\trnorm{\rho} := \Tr\left(\sqrt{\rho^{\dagger}\rho}\right)$ denote the trace norm. The trace distance $\frac{1}{2}  \trnorm{\rho - \sigma}$ is a measure for how different two mixed states $\rho$ and $\sigma$ are.

\subsection{Classical and Quantum Virtual-Black-Box Obfuscation}\label{sec:prelim-obfuscation}
In this work we consider so-called \emph{circuit} obfuscators: the functionalities to be hidden are represented by circuits. A virtual-black-box circuit obfuscator hides the functionality in such a way that the obfuscation looks like a ``black box": the only way to get information about its functionality is to evaluate it on an input and observe the output.

\begin{definition}[{\cite[Definition 2.2]{BGI+01}}]\label{def:classical-vbb}
	A classical virtual black-box obfuscator for the circuit class \F is a probabilistic algorithm \O such that
	\begin{enumerate}
		\item (polynomial slowdown) For every circuit $C \in \F$, $|\O(C)| = \poly{|C|}$;
		\item (functional equivalence) For every circuit $C \in \F$, the string $\O(C)$ describes a circuit that computes the same function as $C$;
		\item (virtual black-box) For any PPT adversary $A$, there exists a PPT simulator $\simS$ such that for all circuits $C \in \F$,
		\[
		\left|
		\Pr\left[
		A(\O(C)) = 1
		\right]
		-
		\Pr\left[
		A(\simS^{C}(1^{|C|})) = 1
		\right]
		\right|
		\leq
		\negl{|C|}.
		\]
	\end{enumerate}
\end{definition}
As a variation on the third requirement, one may assume that some auxiliary information (which may depend on the circuit $C$) is present alongside the obfuscation $\O(C)$. In that case, a simulator with access to that auxiliary information should still be able to simulate the adversary's output distribution:
\begin{definition}[{\cite[Definition 3]{GK05}}]\label{def:classical-vbb-aux}
	A classical virtual black-box obfuscator w.r.t.\ dependent auxiliary input for a circuit class \F is a probabilistic algorithm \O that satisfies \Cref{def:classical-vbb}, with the ``virtual black-box" property redefined as follows:
	\begin{enumerate}
		\setcounter{enumi}{2}
		\item (virtual black-box) For any PPT adversary $A$, there exists a PPT simulator $\simS$ such that for all circuits $C \in \F$ and all strings $\aux \in \{0,1\}^{\poly{|C|}}$ (which may depend on $C$),
		\[
		\left|
		\Pr\left[
		A(\O(C), \aux) = 1
		\right]
		-
		\Pr\left[
		A(\simS^{C}(1^{|C|},\aux)) = 1
		\right]
		\right|
		\leq
		\negl{|C|}.
		\]
	\end{enumerate}
\end{definition}

In the quantum setting, we consider quantum obfuscators for classical circuit classes: that is, the obfuscation $\O(C)$ may be a quantum state. We adapt Definition 5 from~\cite{AF16}, which defines quantum obfuscators for quantum circuits.

\begin{definition}\label{def:quantum-vbb}
	A quantum virtual black-box obfuscator for the classical circuit class \F is a quantum algorithm \O and a QPT \J such that
	\begin{enumerate}
		\item (polynomial expansion) For every circuit $C \in F$, $\O(C)$ is an $m$-qubit quantum state with $m = \poly{n}$;
		\item (functional equivalence) For every circuit $C \in F$ and every input $x$, \[\frac{1}{2}\trnorm{\J(\O(C) \otimes \kb{x}) - \kb{C(x)}} \leq \negl{|C|};\]
		\item (virtual black-box) For every QPT adversary $\advA$, there exists a QPT simulator $\simS$ (with superposition access to its oracle) such that for all circuits $C \in \F$,
		\[
		\left| \Pr[ \advA (\O(C)) = 1] - \Pr[ \simS^{C}(1^{|C|}) = 1] \right| \leq \negl{|C|} .
		\]
	\end{enumerate}
\end{definition}
	There are a few differences with the classical definition. First, the obfuscation is a quantum state, and not a (classical or quantum) circuit. Second, due to the probabilistic nature of quantum computation, we allow a negligible error in the functional equivalence. Third, the simulator is slightly more powerful because of its superposition access to the functionality of $C$: a query performs the unitary operation specified by $\ket{x}\ket{z} \mapsto \ket{x}\ket{z \oplus C(x)}$. Note that a quantum adversary can always use a (classical or quantum) obfuscation to compute the obfuscated functionality on a superposition of inputs, obtaining a superposition of outputs. For this reason, the simulator gets superposition access to its oracle in the quantum setting. Throughout this work, all oracles supplied to quantum algorithms allow for superposition access.
	
	We can again strengthen the virtual black-box property to include (classical or quantum) dependent auxiliary information: this auxiliary string or state would be provided to both the adversary and the simulator, in the same way as in \Cref{def:classical-vbb-aux}.

\subsection{Quantum Fully Homomorphic Encryption}\label{sec:prelim-qfhe}
A fully homomorphic encryption (FHE) of a message $m$ provides privacy by hiding the message, but allows ciphertexts to be transformed in a meaningful way. Given a ciphertext for $m$, some party that only knows the public key can produce a ciphertext for $f(m)$ for any efficiently computable function $f$. Any information that is necessary for this transformation is contained in the public key (in particular, we do not make a distinction between the public key and the evaluation key).

A \emph{quantum} fully homomorphic encryption (QFHE) scheme allows quantum computations on encrypted quantum data. From the Learning with Errors assumption, it is possible to construct secure QFHE schemes where all client-side operations (key generation, encryption, and decryption) are classical~\cite{Mahadev18,Brakerski18}.

\begin{definition}\label{def:qfhe}
A quantum fully homomorphic encryption scheme \QFHE consists of four algorithms, as follows:
\begin{itemize}
    \item \textbf{Key Generation:} $(\pk, \sk) \leftarrow \QFHE.\KeyGen(1^\lambda)$ produces a public key \pk and a secret key \sk, given a security parameter $\lambda$. This is a classical PPT algorithm.
    \item \textbf{Encryption:} $c \leftarrow \QFHE.\Enc_{\pk}(m)$ encrypts a single-bit message $m \in \{0,1\}$. For multi-bit messages $m \in \{0,1\}^\ell$, we write $\QFHE.\Enc_{\pk}(m)$ to denote the bit-by-bit encryption 
    $$
    (\QFHE.\Enc_{\pk}(m_1), \QFHE.\Enc_{\pk}(m_2), \dots, \QFHE.\Enc_{\pk}(m_\ell))\,.
    $$
    This algorithm is in general QPT but it only uses a classical random tape, and furthermore whenever $m$ is classical, so is the encryption algorithm.
    \item \textbf{Decryption:} $m' = \QFHE.\Dec_{\sk}(c)$ decrypts a ciphertext $c$ into a single-bit message $m'$, using the secret key \sk. If $c$ is a ciphertext for a multi-bit message, we write $\QFHE.\Dec_{\sk}(c)$ for the bit-by-bit decryption. Again this is QPT in general, but can be classical if $c$ is classical.
    \item \textbf{Homomorphic evaluation:} $c' \from \QFHE.\Eval_{\pk}(C, c)$ takes as input the public key, a classical description of a BQP circuit $C$ with $\ell$ input wires and $\ell'$ output wires, and a bit-by-bit encrypted ciphertext $c$ encrypting $\ell$ bits. It produces a $c'$, a sequence of $\ell'$ output ciphertexts. This is a QPT algorithm.
\end{itemize}
\end{definition}

We say that a (Q)FHE scheme is (perfectly) correct if the homomorphic evaluation of any BQP circuit $C$ on a ciphertext has the effect of applying $C$ to the plaintext, i.e.,
\[
\QFHE.\Dec_{\sk}\left(\QFHE.\Eval_{\pk}\left(C,\QFHE.\Enc_{\pk}(m)\right)\right) = C(m)
\]
for all $m$, $C$, and $(\pk,\sk) \from \QFHE.\KeyGen(1^{\lambda})$. A (Q)FHE scheme is secure if its encryption function is secure. We usually require quantum indistinguishability under chosen plaintext attacks (q-IND-CPA)~\cite{BJ15}.

The QFHE schemes from~\cite{Mahadev18,Brakerski18} encrypt a message $m$ using a quantum one-time-pad with random keys $a, b \in \{0,1\}$, attaching classical FHE ciphertexts of the one-time pad keys:
\[\QFHE.\Enc_{pk}(m) = \X^a\Z^b\ket{m} \otimes \ket{\FHE.\Enc_{\pk}(a), \FHE.\Enc_{\pk}(b)}.
\]
Note that this ciphertext can be classically represented as the tuple 
$$
(m \oplus a, \FHE.\Enc_{\pk}(a), \FHE.\Enc_{\pk}(b))\,,
$$
so that encryption may be seen as a classical procedure. Conversely, a classical homomorphic encryption $\tilde{m} \from \FHE.\Enc_{\pk}(m)$ can easily be turned into a valid quantum homomorphic encryption by preparing the state
$
\ket{0} \otimes \ket{\tilde{m}, \FHE.\Enc_{\pk}(0)},
$
which decrypts to $m$. Thus, it is possible to freely switch back and forth between quantum ciphertexts and classical ciphertexts, as long as the message is known to be classical.

The quantum one-time pad encryption also straightforwardly extends to encrypting general quantum states $\ket{\psi}$, rather than only computational-basis states $\ket{m}$: the quantum one-time pad is simply applied to the state $\ket{\psi}$, and the one-time-pad keys encrypted into a computational-basis state as above. Of course, encryption becomes a quantum procedure in this setting. We will use encryption of quantum states in our work, where we supply the encryption of a quantum-state obfuscation as the input to a homomorphic evaluation.

\paragraph{Leveled FHE and Bootstrappable FHE.} In many cases in the literature, we wish to consider FHE schemes which require an a priori upper bound (polynomial in the security parameter) on the depth of circuits to be homomorphically evaluated. In the current state of the art, such schemes (referred to as \emph{leveled} FHE) can be constructed under milder assumptions than unleveled schemes: in particular, they do not require circular-security-type assumptions. There are a few variants of leveled FHE defined in the literature, but for the purpose of this work we use the following.

\begin{definition}[Leveled FHE]\label{def:levfhe}
	A leveled (Q)FHE scheme is a scheme where the key generation takes an additional parameter $\KeyGen(1^{\lambda}, 1^d)$ and outputs $(\sk, \pk)$ for a (Q)FHE scheme. Correctness holds only for evaluating circuits of total depth at most $d$. Furthermore, the length of $\sk$ and the complexity of decryption are independent of $d$.	
\end{definition}
We assume w.l.o.g.\ that the random tape used by $\KeyGen$ is always of length $\lambda$ and does not depend on $d$ (this is w.l.o.g.\ since it is always possible to use a PRG to stretch the random tape into the desired length).

One way to construct leveled FHE is via the bootstrapping technique as suggested by Gentry~\cite{Gen09}. Gentry showed that given a base scheme with homomorphic capacity greater than its decryption depth, it is possible to create a leveled scheme with the following properties.
\begin{definition}[Leveled Bootstrapped FHE]\label{def:bootfhe}
A leveled bootstrapped (Q)FHE is one where there exists a base-scheme with a key-generation algorithm $\SubKeyGen(1^\lambda)$ and encryption, decryption and evaluation algorithms, such that the key generation algorithm $\KeyGen(1^{\lambda}, 1^d)$ takes the following form. 
\begin{enumerate}
	\item Run $\SubKeyGen(1^\lambda)$ with fresh randomness $(d+1)$ times to generate sub-keys $(\sk_i, \pk_i)$ for $i=0, \ldots, d$.
	
	\item Encrypt $c^*_i=\Enc_{\pk_i}(\sk_{i-1})$ for all $i=1, \ldots, d$.
	
	\item Output $\pk = (\pk_0, (\pk_1, c^*_1), \ldots, (\pk_d, c^*_d))$ and $\sk = \sk_d$.
\end{enumerate}
For our purposes, we will assume that the random tape that is being used for the $\SubKeyGen$ executions is generated using a pseudorandom function. That is, the random tape of $\KeyGen$ is used as a seed for a PRF, and for the $i$th execution of $\SubKeyGen$ we use a random tape that is derived by applying a PRF on $i$.
\end{definition}
For the sake of completeness we note that the decryption algorithm of the bootstrapped scheme is the same as that of the base scheme, and that for the sake of encryption only $\pk_0$ is needed.

\subsection{Point Functions and Compute-and-Compare Functions}\label{sec:prelim-point-cc}
The class of compute-and-compare functions, as well as its subclass of point functions, plays an important role in this work. In this section we define these function classes.

\begin{definition}[Point function] Let $y \in \{0,1\}^n$.
	The point function \pf{y} is defined by
	\begin{align}
	\pf{y}(x) :=
	\begin{cases}
	1 &\text{if } x = y\\
	0 &\text{otherwise.}
	\end{cases}
	\end{align}
\end{definition}

\noindent The value $y$ is called the \emph{target value}. Point functions are a special type of compute-and-compare function, where the function $f$ is the identity:

\begin{definition}[Compute-and-compare function] \label{def:compute-and-compare}
  Let $f : \{0,1\}^{m} \to \{0,1\}^n$ and $y \in \{0,1\}^n$.
	The compute-and-compare function \cc{f}{y} is defined by
	\begin{align}
	\cc{f}{y}(x) :=
	\begin{cases}
	1 &\text{if } f(x) = y\\
	0 &\text{otherwise.}
	\end{cases}
	\end{align}
\end{definition}

\noindent One can also consider point functions or compute-and-compare functions with \emph{multi-bit output}: in that case, the function outputs either some string $z$ (instead of 1), or the all-zero string (instead of 0). We denote such functions with \mbpf{y}{z} and \mbcc{f}{y}{z}.

\subsection{Recovering the Input of a Quantum Circuit}\label{sec:prelim-input-recovery}
We will consider (efficient) quantum operations as (polynomial-size) circuits, consisting of the following set of basic operations: unitary gates from some fixed constant-size gate set, measurements in the computational basis, and initialization of auxiliary wires in the $\ket{0}$ state.

While unitary gates are always reversible by applying their transpose ($U^{\dagger}U = I$ for any unitary $U$), measurement gates may not be, as they can possibly collapse a state. However, we can effectively delay all measurements in a circuit $C$ until the very end, as follows. Define $U_C$ as the unitary that computes $C$ \emph{coherently}: that is, for every computational-basis measurement in $C$ on some wire $w$, $U_C$ performs a CNOT operation from $w$ onto a fresh auxiliary target wire initialized in the state $\ket{0}$. The circuit $C$ is now equivalent to the following operation: initialize all auxiliary target wires in the $\ket{0}$ state\footnote{If, apart from the targets of the aforementioned CNOTs, the circuit $C$ contains any other wires that are initialized in the $\ket{0}$ state inside the circuit, those wires are also considered part of the input of the unitary $U_C$. They should be initialized to $\ket{0}$ here as well.}, apply the unitary $U_C$, and measure all auxiliary target wires in the computational basis.

In this work, we will encounter circuits $C$ which, for specific inputs, yield a specific state in the computational basis with very high probability. In the proof of the following lemma, we specify how to use coherent computation in order to learn the output value while preserving the input quantum state.
\begin{lemma}\label{lem:input-recovery}
	Let $C$ be a quantum circuit. There exists an input-recovering circuit $\Crec$ such that for all inputs $\rho_{\inp}$, the following holds: if
	$
	\frac{1}{2} \trnorm{C(\rho_\inp) - \kb{x}} \leq \varepsilon
	$
	for some classical string $x$ and some $\varepsilon > 0$, then
	\[
	\frac{1}{2} \trnorm{\Crec(\rho_\inp) - \left(\rho_\inp \otimes \kb{x} \right) } \leq 2\sqrt{\varepsilon}.
	\]
\end{lemma}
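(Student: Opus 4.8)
The plan is to construct $\Crec$ from the \emph{coherent} version $U_C$ of the circuit $C$: first run $U_C$, then coherently copy the output wires onto a fresh register, and then run $U_C^\dagger$ to restore the input. Concretely, let $U_C$ be the unitary computing $C$ coherently (as described above), acting on the input register $A$ together with a fresh ancilla register $Z$ initialized to $\ket{0\cdots 0}$; we assume w.l.o.g.\ that $U_C$ additionally keeps a $\CNOT$-copy of every measured wire in an ancilla, so that $C(\rho_{\inp}) = \Tr_E\!\bigl[U_C(\rho_{\inp}\otimes\kb{0}_Z)U_C^\dagger\bigr]$, where $B$ is the output register of $C$ and $E$ is the complementary register (thus the underlying space carries the two factorizations $AZ$ and $BE$). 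Then $\Crec$ does the following on input $\rho_{\inp}$ on register $A$: (i) append $Z=\ket{0\cdots 0}$ and apply $U_C$; (ii) append a fresh register $B'$ with as many qubits as $B$, in the state $\ket{0\cdots 0}$, and apply a transversal $\CNOT$ (one $\CNOT$ per qubit) from $B$ to $B'$; (iii) apply $U_C^\dagger$ on $BE$; (iv) output the register $A$ together with $B'$ (discarding $Z$).

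For the analysis, set $\sigma := U_C(\rho_{\inp}\otimes\kb{0}_Z)U_C^\dagger$ and $\sigma_B := \Tr_E\sigma = C(\rho_{\inp})$, so $\tfrac12\trnorm{\sigma_B-\kb{x}}\le\varepsilon$; we may assume $\varepsilon<\tfrac14$, since otherwise $2\sqrt\varepsilon\ge1$ and the conclusion is trivial. The first step is to show that $\sigma$ is itself almost a product across $B\,|\,E$ with $B$-part $\ket x$. Since $\kb{x}-\sigma_B$ is traceless Hermitian with $\trnorm{\kb{x}-\sigma_B}\le2\varepsilon$, the sum of its positive eigenvalues is at most $\varepsilon$, hence $\bra{x}(\kb{x}-\sigma_B)\ket{x}\le\varepsilon$, i.e.\ $\bra{x}\sigma_B\ket{x}\ge1-\varepsilon$. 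Regarding $\kb{x}_B\otimes\Id_E$ as a measurement operator, its acceptance probability on $\sigma$ is $\Tr\bigl[(\kb{x}_B\otimes\Id_E)\sigma\bigr]=\bra{x}\sigma_B\ket{x}\ge1-\varepsilon$, so the gentle measurement lemma provides a state $\xi_E$ on $E$ with $\tfrac12\trnorm{\sigma-\kb{x}_B\otimes\xi_E}\le\sqrt\varepsilon$.

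Now let $\Lambda$ be the fixed (i.e.\ $\rho_{\inp}$-independent) channel implementing steps (ii)--(iv): it appends $\kb{0}_{B'}$, applies the transversal $\CNOT$ from $B$ to $B'$, applies $U_C^\dagger$ on $BE$, and traces out $Z$; by construction $\Crec(\rho_{\inp})=\Lambda(\sigma)$. On the product state, the $\CNOT$ turns $\kb{x}_B\otimes\xi_E\otimes\kb{0}_{B'}$ into $\kb{x}_B\otimes\xi_E\otimes\kb{x}_{B'}$, so $\Lambda(\kb{x}_B\otimes\xi_E)=\zeta_A\otimes\kb{x}_{B'}$ where $\zeta_A:=\Tr_Z\!\bigl[U_C^\dagger(\kb{x}_B\otimes\xi_E)U_C\bigr]$. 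Because $U_C^\dagger\sigma U_C=\rho_{\inp}\otimes\kb{0}_Z$ and both conjugation by a unitary and the partial trace are trace-distance contractions, $\tfrac12\trnorm{\rho_{\inp}-\zeta_A}\le\tfrac12\trnorm{\sigma-\kb{x}_B\otimes\xi_E}\le\sqrt\varepsilon$. Combining monotonicity of trace distance under $\Lambda$ with the triangle inequality,
\[
\tfrac12\trnorm{\Crec(\rho_{\inp})-\rho_{\inp}\otimes\kb{x}}\;\le\;\tfrac12\trnorm{\Lambda(\sigma)-\Lambda(\kb{x}_B\otimes\xi_E)}+\tfrac12\trnorm{\zeta_A\otimes\kb{x}_{B'}-\rho_{\inp}\otimes\kb{x}_{B'}}\;\le\;\sqrt\varepsilon+\sqrt\varepsilon\;=\;2\sqrt\varepsilon,
\]
where we identify the output registers $(A,B')$ of $\Crec$ with the two tensor factors in the statement.

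The main obstacle I anticipate is not conceptual but a matter of careful bookkeeping: setting up $U_C$ and the split $B\,|\,E$ so that $C(\rho_{\inp})$ is \emph{exactly} the partial trace of the coherent state (the only delicate point being wires that $C$ both measures and outputs, handled above by keeping copies in $E$), and tracking the constants tightly enough --- the sharp estimate $\bra{x}\sigma_B\ket{x}\ge1-\varepsilon$ together with the $\sqrt\varepsilon$ (rather than $2\sqrt\varepsilon$) form of the gentle measurement lemma --- so that the two error terms add up to exactly the claimed $2\sqrt\varepsilon$.
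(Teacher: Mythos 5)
Your construction of $\Crec$ (run $U_C$ coherently, transversally copy the output register, uncompute with $U_C^\dagger$) and your analysis (the global post-$U_C$ state is $\sqrt{\varepsilon}$-close in trace distance to a product $\kb{x}\otimes\xi_E$, then a triangle inequality yields $2\sqrt{\varepsilon}$) are essentially identical to the paper's proof in Appendix A. The only difference is that where the paper cites Lemma A.1 of~\cite{ABC+19} for the product-state approximation, you derive it yourself from $\bra{x}\sigma_B\ket{x}\ge 1-\varepsilon$ and the sharp form of the gentle measurement lemma, which is a correct and self-contained substitute.
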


The specification of \Crec is independent of the specific input state $\rho_\inp$. However, \Crec cannot necessarily recover \emph{all} possible inputs $\rho_\inp$, only those that lead to an almost-classical output.

The input-recovering circuit consists of running $C$ coherently, copying out the output register, and reverting the coherent computation of $C$. We formally prove \Cref{lem:input-recovery} in \Cref{app:proof-input-recovery}.

\section{FHE with Decomposable Public Keys}\label{sec:qfhe-decomposable}

For the purpose of our result in \Cref{sec:no-aux}, we will need to obfuscate a class of circuits that allow to (quantumly) homomorphically evaluate  operations of arbitrary polynomial depth. We nevertheless wish to rely only on leveled FHE for the sake of minimizing our assumptions. We therefore would like to define a class of circuits that are a priori polynomially bounded in size, but which are capable of encapsulating public-key generation of a leveled scheme for some depth $d$ that is not fixed a priori. Note that in a leveled scheme even the length of $\pk$ depends on $d$.

To this end, we define the notion of a scheme with \emph{decomposable} public key, which is defined below. Intuitively, in such a scheme, the public key can be generated by first generating a sequence of blocks, each of some size independent of $d$. These blocks can then be combined into the actual $\pk$ of the scheme. Crucially, the generation of the blocks can be done in parallel, and the complexity of generating each block (given the security parameter and the random tape) is independent of $d$. 
In other words, a decomposable public key can be generated on the fly, involving small ``chunks'' of computation that are independent of $d$.
Formally, we recall \Cref{def:levfhe} and define decomposability as follows.

\begin{definition}[Decomposable public key]\label{def:decomposable-key}
	A leveled (Q)FHE scheme has a \emph{decomposable public key} if there exists a polynomial $K=K(\lambda,d)$ and a polynomial-time deterministic function $\BlockGen(1^{\lambda},i,r,r')$ (where $r, r' \in \{0,1\}^{\lambda}$) that generates classical strings (``blocks") $c_i$ 
	such that the following holds:
	\begin{enumerate}
		\item \textbf{Correctness:} there exists a QPT $\Assemble$ such that for all $\lambda, d, r,$ and $r'$, letting $K=K(\lambda,d)$, it holds that
		\[
		\Assemble(c_0, c_1, c_2, \dots, c_K) = \pk,
		\]
		where $(\pk,\sk) = \KeyGen(1^{\lambda}, 1^d; r)$, and $c_i = \BlockGen(1^{\lambda}, i,r,r')$ for all $i$.
		\item \textbf{Simulatability:} there exists a QPT simulator $\simS$ such that for all $d$ and $r$,
		\[
		\simS(1^{\lambda}, \pk) \capprox (c_0, c_1, c_2, \dots, c_K),
		\]
		where $(\pk,\sk) = \KeyGen(1^{\lambda}, d,r)$, and the distribution on $(c_1, c_2, \dots, c_K)$ on the right-hand side is generated by selecting a uniformly random $r'$, and then for all $i$, setting $c_i = \BlockGen(1^{\lambda}, i, r, r')$.
	\end{enumerate}
\end{definition}

We emphasize that in \Cref{def:decomposable-key}, the randomness strings $r$ and $r'$ are the same for every run of $\BlockGen$. The reason for this choice is twofold. First, with our final goal in mind of obfuscating the $\BlockGen$ functionality, we want to avoid having to specify $K$ independent randomness strings (as that would considerably increase the size of the circuit to obfuscate). Second, most schemes require some form of correlation to exist between the different blocks. Thinking of $r$ and $r'$ as short random seeds for a PRF, this correlation can be realized by running the PRF on the same inputs (see, for example \Cref{sec:decomposable-bootstrapped}).

\subsection{Instantiation from Bootstrapped Schemes}\label{sec:decomposable-bootstrapped}

For bootstrapped schemes (see \Cref{def:bootfhe}), decomposability follows immediately by definition. In this case, we do not even need the extra randomness $r'$ and can simply set $\BlockGen(1^{\lambda},i,r,r')$ to be the process that evaluates $PRF_r(i-1)$ and $PRF_r(i)$ to generate random tapes for $\SubKeyGen$, uses this randomness to generate $(\sk_{i-1}, \pk_{i-1})$ and $(\sk_i, \pk_i)$, generates $c^*_i$ based on these values, and outputs $(\pk_i, c^*_i)$. In addition, for $i=0$, it simply computes $PRF_r(0)$, and uses the resulting randomness to generate $\pk_0$.

Existing QFHE schemes are based on bootstrapping~\cite{Mahadev18,Brakerski18}.
Without affecting security, we can assume that their randomness is sampled using a PRF as just described.

\begin{lemma}
	Bootstrapping-based leveled QFHE schemes with keys generated from a PRF have decomposable public keys.
\end{lemma}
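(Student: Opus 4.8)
The plan is to verify that the generic construction described in Section~\ref{sec:decomposable-bootstrapped} satisfies the two requirements of \Cref{def:decomposable-key}, namely \textbf{Correctness} and \textbf{Simulatability}, when instantiated with a bootstrapping-based leveled QFHE scheme (\Cref{def:bootfhe}) whose $\SubKeyGen$ randomness is derived from a PRF keyed by the master random tape $r$. Recall that in this setting we take $K = K(\lambda,d) = d$, ignore the auxiliary randomness $r'$ entirely, and define $\BlockGen(1^{\lambda}, i, r, r')$ as follows: for $i \geq 1$, it evaluates $PRF_r(i-1)$ and $PRF_r(i)$, feeds these as random tapes to $\SubKeyGen(1^{\lambda})$ to obtain $(\sk_{i-1},\pk_{i-1})$ and $(\sk_i,\pk_i)$, computes $c^*_i = \Enc_{\pk_i}(\sk_{i-1})$, and outputs the block $c_i = (\pk_i, c^*_i)$; for $i = 0$ it evaluates $PRF_r(0)$, runs $\SubKeyGen$ on that tape to obtain $(\sk_0,\pk_0)$, and outputs $c_0 = \pk_0$.

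First I would establish \textbf{Correctness}. Here $\Assemble(c_0, c_1, \dots, c_d)$ simply parses $c_0 = \pk_0$ and $c_i = (\pk_i, c^*_i)$ for $i = 1, \dots, d$, and outputs $\pk = (\pk_0, (\pk_1, c^*_1), \dots, (\pk_d, c^*_d))$. By \Cref{def:bootfhe}, the output of $\KeyGen(1^{\lambda}, 1^d; r)$ has exactly this form, where the $i$-th execution of $\SubKeyGen$ uses the tape $PRF_r(i)$. The only thing to check is that the sub-keys appearing across the different $\BlockGen$ invocations are consistent: the $\pk_{i-1}$ recomputed inside $\BlockGen(\cdot, i, \cdot)$ and the $\pk_{i-1}$ recomputed inside $\BlockGen(\cdot, i-1, \cdot)$ agree, because $\SubKeyGen$ is deterministic given its tape and both invocations use the same tape $PRF_r(i-1)$ (this is precisely why the same seed $r$ is shared across all blocks, as emphasized after \Cref{def:decomposable-key}). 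Hence $\Assemble$ reconstructs exactly $\pk = \KeyGen(1^{\lambda}, 1^d; r)$, and $\Assemble$ is clearly polynomial time, so it is in particular QPT.

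Next I would establish \textbf{Simulatability}. Since $\Assemble$ is a \emph{deterministic} injective repackaging of the blocks — the tuple $(c_0, \dots, c_d)$ is literally a reformatting of $\pk$ — the simulator $\simS(1^{\lambda}, \pk)$ simply parses $\pk = (\pk_0, (\pk_1, c^*_1), \dots, (\pk_d, c^*_d))$ and outputs $(c_0, c_1, \dots, c_d)$ with $c_0 = \pk_0$, $c_i = (\pk_i, c^*_i)$. This yields an \emph{exact} reproduction of the real block distribution (indeed the distribution is a point mass given $\pk$), so the required computational indistinguishability $\simS(1^{\lambda},\pk) \capprox (c_0, \dots, c_K)$ holds trivially. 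Note that because $r'$ is unused, the "selecting a uniformly random $r'$" step in the definition is vacuous here, and the whole statement reduces to this syntactic identity.

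I do not expect a serious obstacle here; the lemma is essentially an unpacking of definitions, and the slightly delicate point is just the bookkeeping that the shared PRF seed $r$ makes the overlapping sub-keys across consecutive blocks consistent, so that $\Assemble$ recovers the genuine key rather than a mismatched concatenation. One should also remark, for completeness, that the length of each block $c_i$ and the running time of $\BlockGen$ depend only on $\lambda$ (through the fixed cost of two $\SubKeyGen$ calls, one encryption, and two PRF evaluations) and are independent of $d$, which is the whole point of the decomposability notion; this follows from \Cref{def:bootfhe} together with \Cref{def:levfhe}'s guarantee that $|\sk|$ and the decryption/encryption complexity of the base scheme do not grow with $d$.
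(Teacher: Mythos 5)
Your proposal is correct and follows essentially the same route as the paper: define $K(\lambda,d)=d$, let $\BlockGen$ regenerate $(\pk_{i-1},\sk_{i-1})$ and $(\pk_i,\sk_i)$ from the PRF tapes $PRF_r(i-1)$ and $PRF_r(i)$ (ignoring $r'$), let $\Assemble$ concatenate, and observe that simulatability is trivial since the blocks are just a reparsing of $\pk$. Your added remarks (consistency of the overlapping sub-keys via determinism of $\SubKeyGen$ on a fixed tape, and the $d$-independence of each block's size and of $\BlockGen$'s running time) are correct elaborations of points the paper leaves implicit.
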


\begin{proof}
	Define $K(\lambda,d) := d$, and $c_0 := \pk_0$. For $i > 0$, define the blocks $c_i$, which are generated by $\BlockGen(1^{\lambda}, i, r,r')$, as follows:
	\begin{align}
	c_i := (\pk_i, c^*_i = \Enc_{\pk_{i}}(\sk_{i-1})), \text{ where } (\pk_i, \sk_i) &\from \SubKeyGen(1^{\lambda} ; PRF_r(i)),\nonumber\\
	(\pk_{i-1}, \sk_{i-1}) &\from \SubKeyGen(1^{\lambda} ; PRF_r(i-1)).
	\end{align}
	Note that for public keys of this form, $\BlockGen$ does not make use of the additional randomness $r'$. 
	
	The assembly function $\Assemble(c_0, c_1, \dots, c_d)$ is a straightforward concatenation of all the blocks: $\Assemble(c_0, c_1, \dots, c_d) := (c_0, c_1, \dots, c_d)$.
	
	Simulatability as in \Cref{def:decomposable-key} is also easily satisfied: a simulator $\simS$, for a public key $\pk$ and index $i$, reads out the pair $(\pk_i, c^*_i)$. It can thereby exactly produce the list $(c_1, \dots, c_d)$.
\end{proof}

\subsection{Instantiation from Any Leveled (Q)FHE}

We now observe that we can instantiate the a (Q)FHE with decomposable public keys from any leveled scheme, even ones that are not bootstrapped. 
Decomposing the public key of a general QFHE scheme is done via garbled circuits~\cite{Yao86,App17}, as we will briefly outline here.
A block $c_i$ corresponds to a single garbled gate of the circuit for $\KeyGen$. That is, $\BlockGen(1^{\lambda}, i, r, r')$ returns a garbling of the $i$th gate\footnote{The total number of blocks, $K(\lambda,d)$, will be the number of gates in $\KeyGen(1^{\lambda}, 1^d, r)$. Since the number of gates is polynomial in $\lambda$, it suffices for the length of the PRF seed $r'$ to be linear in $\lambda$.} of $\KeyGen(1^{\lambda},d,r)$, using $r'$ as a PRF seed to generate sufficient randomness for the garbling. A separate block (e.g., $c_0$) contains the required encoding/decoding information to use the garbled circuit. To assemble the public key, a user concatenates all garbled gates, and evaluates the garbled circuit to obtain the output $\pk$. Conversely, by the privacy property of garbled circuits~\cite{BHR12}, a simulator given only the security parameter $\lambda$ and the output $\pk$ of the garbled circuit, can reproduce a garbled circuit that is indistinguishable from the actual garbled circuit. It can then return the gates of that simulated garbled circuit as the blocks $c_i$.

Any result relying on the decomposability of the public key of a non-bootstrapping based QFHE scheme of course also relies on any computational assumptions required for the security of the garbled-circuit construction.

\section{Impossibility with respect to Dependent Auxiliary Information}\label{sec:aux}
In this section, we show impossibility of virtual-black-box quantum obfuscation of classical point functions under dependent auxiliary information. It sets the stage for our main result, \Cref{thm:impossibility}, where we incorporate the auxiliary information into the circuit, constructing a circuit class which is unobfuscatable even without the presence of any auxiliary information. Although the result in the current section is perhaps less surprising, the proof contains the most important technical details of our work.

The impossibility result requires two cryptographic primitives, both of which can be built from the hardness of LWE~\cite{Mahadev18,WZ17,GKW17}: (1) quantum fully homomorphic encryption with classical client-side operations (see \Cref{sec:prelim-qfhe}), and (2) classical vbb obfuscation of compute-and-compare functions. Our result therefore holds under the assumption that LWE is hard. The least favorable LWE parameters are required for the obfuscation of compute-and-compare functionalities, and are discussed in \Cref{sec:classical-obfuscation-of-CC}.

In \Cref{sec:classical-obfuscation-of-CC}, we describe the classical obfuscator for compute-and-compare functions that we use. We will apply the construction from~\cite{WZ17,GKW17} to a specific class of compute-and-compare functions with a specific type of auxiliary information. In \Cref{sec:aux-unobfuscatable-class}, we use this specific application to define a class of circuits and auxiliary-information strings that is unobfuscatable in the quantum vbb sense. The impossibility proof follows in \Cref{sec:aux-impossibility-proof}.

\subsection{Classical obfuscation of compute-and-compare functions}\label{sec:classical-obfuscation-of-CC}
The works of \cite{WZ17,GKW17} showed that under the assumption that LWE (with polynomial dimension and exponential modulus in the security parameter $\lambda$) is hard, it is possible to classically obfuscate compute-and-compare functions~\cite{WZ17,GKW17}. We will write ``\LWEWZ" to denote their specific variant of the LWE assumption. We note that LWE is known to be at least as hard as worst-case lattice problems \cite{Regev05,PRSD17}. In particular, the aforementioned parameter regime \LWEWZ{} translates to the worst-case hardness of the Gap Shortest Vector Problem (GapSVP) with sub-exponential approximation factor (in the dimension of the lattice).
There is currently no known super-polynomial quantum speedup for GapSVP, and the best known quantum (and classical) algorithms require sub-exponential running time.

The works of \cite{WZ17,GKW17} achieve so-called distributional virtual-black-box obfuscation of functions $\cc{f}{y}$, assuming that the target value $y$ has sufficient pseudo-entropy given a description of the function $f$. The obfuscation is even secure in the presence of (dependent) auxiliary information, so long as the pseudo-entropy of the target value remains high, even conditioned on this auxiliary information.

In our construction, we provide a classically-obfuscated compute-and-compare function as auxiliary information to a quantum obfuscation. We will require that the target value of the compute-and-compare function is sufficiently random, even given the rest of the auxiliary information (including the quantum obfuscation).

More specifically, for any IND-secure public-key encryption scheme $(\KeyGen,\Enc,\Dec)$, fixed bit string $\alpha$, and a classical obfuscation procedure $\O(\cdot)$, define a distribution ensemble $\{D_{\lambda}^{\alpha,d}\}_{\lambda \in \mathbb{N}}$ that samples
\begin{align}
(\pk, \cipha, \obfsk) \from D_{\lambda}^{\alpha,d} 
\quad \text{ as } \quad &(\pk, \sk) \from \KeyGen(1^{\lambda},1^d), \nonumber\\
&\cipha \from \Enc_{\pk}(\alpha), \nonumber\\
&\beta \from_R \{0,1\}^{\lambda}\, ,\nonumber\\
&\obfsk \from \O\left(\cc{\Dec_{\sk}}{\beta}\right),\label{eq:distribution}
\end{align}
where $\cc{\Dec_{\sk}}{\beta}$ is a compute-and-compare function as in~\Cref{def:compute-and-compare}.
For each $\alpha$ and $\lambda$, the target value $\beta$ is chosen independently from all other information: its pseudo-entropy is $\lambda$, even conditioned on $\pk$, $\cipha$ and $\Dec_\sk$. Therefore, there exists an obfuscation procedure for this class of compute-and-compare programs that has distributional indistinguishability in the following sense:

\begin{lemma}
	[{Application of~\cite[Theorem 5.2]{WZ17}}]
	\label{lem:security-of-CC-obfuscation}
	Under the \LWEWZ{} assumption, there exists a classical obfuscation procedure $\ccobf{\cdot}$ and a (non-uniform) simulator $\simS$ such that for all $\alpha$ and $d$,
	\begin{align}
	(\pk, \cipha, \obfsk) \capprox (\pk, \cipha, \simS(1^{\lambda}, \params)),
	\end{align}
	where $(\pk, \cipha, \obfsk) \from D_{\lambda}^{\alpha,d}$ using $\ccobf{\cdot}$ as the obfuscation procedure $\O(\cdot)$, and $\params$ is some information that is independent of $\sk$ and $\beta$ (e.g., it may contain the size of the circuit and/or $\lambda$).
\end{lemma}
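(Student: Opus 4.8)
The plan is to invoke the distributional virtual-black-box obfuscator for compute-and-compare programs of~\cite{WZ17,GKW17} essentially as a black box, and to spend the rest of the argument checking that the ensemble $\{D_{\lambda}^{\alpha,d}\}_{\lambda \in \mathbb{N}}$ has the shape of input for which their security theorem gives a guarantee. First I would fix the correspondence with~\cite[Theorem 5.2]{WZ17}: the program being obfuscated is $\cc{\Dec_{\sk}}{\beta}$, so the ``compute'' part is $f = \Dec_{\sk}$ and the target is $y = \beta$; and the pair $(\pk,\cipha)$ is treated as the (circuit-dependent) auxiliary input that is handed to the distinguisher together with the obfuscation. I would then simply let $\ccobf{\cdot}$ be the obfuscator supplied by that theorem, and $\simS$ its associated (non-uniform) simulator.

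The one hypothesis that needs verifying is the entropy condition: the construction of~\cite{WZ17,GKW17} is secure precisely when the target value is unpredictable given the description of $f$ and the auxiliary input, in the sense of having pseudo-entropy above the threshold dictated by the \LWEWZ{} parameters. In our distribution, $\beta$ is drawn uniformly from $\{0,1\}^{\lambda}$ \emph{after} and independently of $(\pk,\sk,\cipha)$; hence, conditioned on $(\Dec_{\sk},\pk,\cipha)$, it is still uniform on $\{0,1\}^{\lambda}$, so its conditional min-entropy is exactly $\lambda = |\beta|$ — the maximum possible, and in particular well above whatever polynomial threshold the parameter regime requires. This makes the induced ensemble $\{(\cc{\Dec_{\sk}}{\beta},(\pk,\cipha))\}_{\lambda}$ admissible, so~\cite[Theorem 5.2]{WZ17} applies and delivers
\[
\big((\pk,\cipha),\ccobf{\cc{\Dec_{\sk}}{\beta}}\big)\;\capprox\;\big((\pk,\cipha),\simS(1^{\lambda},\params)\big),
\]
where $\params$ bundles only public quantities: the input length, output length, and size of the circuit $\Dec_{\sk}$, together with $\lambda$. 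These scale with $\lambda$ and with the level $d$ (the ciphertext length of a leveled scheme grows with $d$), but they are independent of the particular $\sk$ and of $\beta$, exactly as the statement demands. Because $(\pk,\cipha)$ is generated in the same way on both sides, this is literally the asserted relation $(\pk,\cipha,\obfsk)\capprox(\pk,\cipha,\simS(1^{\lambda},\params))$ with the left-hand side distributed as $D_{\lambda}^{\alpha,d}$ under $\ccobf{\cdot}$. Finally I would note that, since \LWEWZ{} is assumed hard for quantum algorithms and the reduction of~\cite{WZ17} is classical, the indistinguishability upgrades to hold against QPT distinguishers, which is the form used later.

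I do not expect a genuine mathematical obstacle in this step: the whole argument is a matter of lining up the roles of $f$, $y$, and the auxiliary input, and then observing that $\beta$, being literally uniform and independent of everything else, trivially meets the pseudo-entropy requirement. The two points that deserve a sentence of care are: (i) the description of $\Dec_{\sk}$ reveals $\sk$, so one must \emph{not} — and fortunately need not — argue that $\cipha$ conceals $\alpha$; only the target $\beta$ must be unpredictable, and it remains uniform even given $\sk$; and (ii) both $\params$ and the running time of $\simS$ are permitted to grow with $d$, consistent with the statement, which asks only for independence from $\sk$ and $\beta$.
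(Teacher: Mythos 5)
Your proposal is correct and matches the paper's own justification of this lemma, which is likewise a direct application of~\cite[Theorem 5.2]{WZ17} after observing that $\beta$ is drawn uniformly and independently of $(\pk,\sk,\cipha)$, hence retains full (pseudo-)entropy conditioned on $\Dec_{\sk}$ and the auxiliary input, together with the remark that the classical reduction to LWE transfers security to quantum distinguishers. (One tangential nitpick: your parenthetical that the ciphertext length grows with $d$ is at odds with the paper's leveled-FHE definition, where decryption complexity is independent of $d$, but this plays no role in the validity of your argument for this lemma.)
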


In the rest of this work, $\ccobf{\cdot}$ will implicitly be the obfuscation procedure used in the distributions $D^{\alpha,d}_{\lambda}$.

We note that the proofs in \cite{WZ17,GKW17} showed a classical reduction from the hardness of distinguishing the aforementioned distributions to the hardness of solving LWE. We note that proofs by (either Karp or Turing) classical polynomial-time reduction from $A$ to $B$ implies that any solver for $A$ can be translated into a solver for $B$ with comparable complexity, in particular if the solver for $A$ runs in quantum polynomial time then so will the resulting solver for $B$.

As a consequence of \Cref{lem:security-of-CC-obfuscation}, we show that it is hard to guess the value of $\alpha$, given only a ciphertext $\cipha$ for $\alpha$, and an obfuscation of the compute-and-compare function. Intuitively, since the information $\alpha$ is completely independent of the target value $\beta$, the obfuscation effectively hides the secret key $\sk$ that would be necessary to learn $\alpha$.

\begin{lemma}\label{lem:security-of-CC-with-side-information}
	Under the \LWEWZ{} assumption, there exists a negligible function $\negl{\cdot}$ such that for any QPT algorithm $A$ and any $d$,
	\begin{align}
	\Pr[A(\pk, \cipha, \obfsk) = \alpha] \leq \negl{\lambda}.
	\end{align}
	Here, the probability is over $\alpha \from_R \{0,1\}^{\lambda}$, $(\pk, \cipha, \obfsk) \from D_{\lambda}^{\alpha,d}$, and the execution of $A$.
\end{lemma}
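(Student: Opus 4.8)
The plan is to derive the claim from \Cref{lem:security-of-CC-obfuscation} together with the (quantum) semantic security of the public-key encryption scheme $(\KeyGen,\Enc,\Dec)$ used to define $D_{\lambda}^{\alpha,d}$. The key observation is that in the \emph{simulated} distribution $(\pk,\cipha,\simS(1^{\lambda},\params))$ nothing except the ciphertext $\cipha \from \Enc_{\pk}(\alpha)$ depends on $\alpha$: the value $\beta$ and the secret key $\sk$ are sampled independently of $\alpha$ in $D_{\lambda}^{\alpha,d}$, and $\params$ is by construction a function of public parameters only (the security parameter, $d$, and the size of the circuit for $\cc{\Dec_{\sk}}{\beta}$, which for a leveled scheme does not depend on $\alpha$). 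Recovering a uniformly random $\alpha \in \{0,1\}^{\lambda}$ from an encryption of it contradicts IND-CPA security, so the guessing probability is negligible in the simulated world; \Cref{lem:security-of-CC-obfuscation} then transfers this bound to the real world.

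First I would pass from the real to the simulated distribution. Let $D$ be the distinguisher that, on input a pair $(\alpha,z)$, runs $A(z)$ and outputs $1$ if and only if $A$'s output equals $\alpha$. By \Cref{lem:security-of-CC-obfuscation}, for every fixed $\alpha$ and $d$ the ensembles $(\pk,\cipha,\obfsk)$ and $(\pk,\cipha,\simS(1^{\lambda},\params))$ are computationally indistinguishable, and --- as noted just after that lemma --- since the underlying reduction is classical, this holds against QPT distinguishers as well. Averaging over $\alpha \from_R \{0,1\}^{\lambda}$ and appending $\alpha$ to the samples, the ensembles $(\alpha,\pk,\cipha,\obfsk)$ and $(\alpha,\pk,\cipha,\simS(1^{\lambda},\params))$ remain quantum-computationally indistinguishable, so $D$ cannot separate them, giving
\[
\left| \Pr[A(\pk,\cipha,\obfsk)=\alpha] - \Pr[A(\pk,\cipha,\simS(1^{\lambda},\params))=\alpha] \right| \leq \negl{\lambda},
\]
where both probabilities are now also taken over $\alpha \from_R \{0,1\}^{\lambda}$.

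Next I would bound the simulated-world guessing probability using semantic security. Since $\params$ carries no information about $\alpha$, a sample $(\pk,\cipha,\simS(1^{\lambda},\params))$ can be produced from $(\pk,\cipha)$ alone; hence by the (quantum) IND-CPA security of $(\KeyGen,\Enc,\Dec)$ --- which holds under LWE, and hence under \LWEWZ, via the standard bit-by-bit hybrid --- the ensembles $(\alpha,\pk,\Enc_{\pk}(\alpha),\simS(1^{\lambda},\params))$ and $(\alpha,\pk,\Enc_{\pk}(0^{\lambda}),\simS(1^{\lambda},\params))$ are quantum-computationally indistinguishable for $\alpha \from_R \{0,1\}^{\lambda}$. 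Applying $D$ once more,
\[
\Pr[A(\pk,\cipha,\simS(1^{\lambda},\params))=\alpha] \leq \Pr[A(\pk,\Enc_{\pk}(0^{\lambda}),\simS(1^{\lambda},\params))=\alpha] + \negl{\lambda} \leq 2^{-\lambda} + \negl{\lambda},
\]
where the last step holds because in that experiment $\alpha$ is uniform in $\{0,1\}^{\lambda}$ and information-theoretically independent of $A$'s input. Chaining the two displays yields $\Pr[A(\pk,\cipha,\obfsk)=\alpha] \leq \negl{\lambda}$.

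The argument is essentially routine, and I expect the only delicate point to be the bookkeeping between the ``for every fixed $\alpha$'' form of \Cref{lem:security-of-CC-obfuscation} (and of IND-CPA) and the ``$\alpha \from_R \{0,1\}^{\lambda}$'' form of the statement we want. This is exactly what the trick of appending the uniformly chosen $\alpha$ to the distinguisher's input handles: $\alpha$ never needs to appear inside the two indistinguishable ensembles, only alongside them, and then a standard averaging argument finishes the job. One should also confirm --- immediate from the definition of $D_{\lambda}^{\alpha,d}$ --- that neither $\simS$ nor $\params$ leaks anything about $\alpha$; without this, replacing $\obfsk$ by $\simS(1^{\lambda},\params)$ would not obviously help.
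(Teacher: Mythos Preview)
Your proposal is correct and follows essentially the same approach as the paper: define a single-bit distinguisher that compares $A$'s output to $\alpha$, invoke \Cref{lem:security-of-CC-obfuscation} for each fixed $\alpha$ to replace $\obfsk$ by $\simS(1^{\lambda},\params)$, and then use the (quantum) IND-CPA security of the encryption to bound the guessing probability once nothing but $\cipha$ depends on $\alpha$. The paper's write-up is terser in the last step (it just ``invokes privacy of the encryption'' rather than spelling out the hybrid to $\Enc_{\pk}(0^{\lambda})$), but the argument is the same.
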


\begin{proof}
	The result follows almost directly from \Cref{lem:security-of-CC-obfuscation}, except that we want to bound the probability that $A$ outputs the multi-bit string $\alpha$, whereas \Cref{lem:security-of-CC-obfuscation} only deals with algorithms with a single-bit output.
	
	To bridge the gap, define an algorithm $A'_{\alpha}$ that runs $A$ on its input, and compares the output of $A$ to $\alpha$: if they are equal, $A'_{\alpha}$ outputs 1; otherwise, it outputs 0.
	
	For any \emph{fixed} value of $\alpha$, we have
	
	\begin{align}
	\Pr[A(\pk, \cipha, \obfsk) = \alpha] &= \Pr[A'_{\alpha}(\pk, \cipha, \obfsk) = 1]\\
	&\stackrel{(*)}{\approx} \Pr[A'_{\alpha}(\pk, \cipha, \simS(1^{\lambda},\params)) = 1]\\
	&= \Pr[A(\pk, \cipha, \simS(1^{\lambda},\params)) = \alpha].
	\end{align}
	
	The approximation (*) follows from \Cref{lem:security-of-CC-obfuscation}, and holds up to a difference of $\negl{\lambda}$.
	
	To complete the proof, note that $\simS(1^{\lambda}, \params)$ depends neither on $\alpha$ nor on $\sk$. Thus, randomizing over $\alpha$ again, and invoking privacy of the encryption, we get
	\begin{align}
	\Pr[A(\pk, \cipha, \obfsk) = \alpha] \approx \Pr[A(\pk, \cipha, \simS(1^{\lambda},\params)) = \alpha] \leq \negl{|\alpha|} = \negl{\lambda}.
	\end{align}
\end{proof}
We have thus established that, even in the presence of an obfuscated compute-and-compare function that depends on the secret key, encryptions remain secure (in the one-way sense). For this security to hold, it is important that the target value $\beta$ is sufficiently independent of the plaintext $\alpha$.

\subsection{An unobfuscatable circuit class}\label{sec:aux-unobfuscatable-class}
In this subsection, we define the class of circuits and auxiliary-information strings that we will prove unobfuscatable. Like in~\cite{BGI+01}, we will exploit the idea that access to an object (circuit or quantum state) that allows the evaluation of a function is more powerful than mere black-box access to the functionality: in particular, it allows to evaluate the function homomorphically. For this argument to work, it is important that the function is not easily learnable through black-box access. We will use point functions, as in~\cite{BGI+01}: with black-box access only, it is hard to tell the difference between a point function and the all-zero function \zf{\lambda}, that accepts inputs of length $\lambda$, and always returns $0^{\lambda}$.

Consider the class $\Cpoint_{\lambda,d} \cup \Czero_{\lambda,d}$ of circuits plus auxiliary information, where

\begin{align}
\Cpoint_{\lambda,d} := \{(\mbpf{\alpha}{\beta}, &(\pk, \cipha,\obfsk)) \mid \alpha \in \{0,1\}^{\lambda}, (\pk, \cipha,\obfsk) \in \text{supp}(D_{\lambda}^{\alpha,d})\},\\
\Czero_{\lambda,d} := \{(\zf{\lambda}, &(\pk, \cipha,\obfsk)) \mid \alpha \in \{0,1\}^{\lambda}, (\pk, \cipha,\obfsk) \in \text{supp}(D_{\lambda}^{\alpha,d})\}.
\end{align}
The class $\Cpoint_{\lambda,d}$ contains all $\lambda$-bit point functions, together with an encryption of the point input $\alpha$, a public key that enables evaluation of circuits up to depth $d$, and a function that checks whether a ciphertext decrypts to the target value $\beta$. $\Czero_{\lambda,d}$ contains the all-zero function \zf{\lambda} (which is itself a point function), but still with auxiliary information for the possible values of $\alpha$ and $\beta$.

Suppose that some quantum obfuscation $(\qobf{\cdot}), \intJ)$ exists. We define a QPT algorithm $A$, which expects an obfuscation $\rho = \qobf{\mbpf{\alpha}{\beta}}$ (or $\qobf{\zf{\lambda}}$), together with the classical auxiliary information $\aux = (\pk, \cipha, \obfsk)$. On general inputs $\rho$ and $\aux = (\key, \ctxt, \obf)$ of this form, let $A$ do as follows:

\begin{enumerate}
	\item Run $\QFHE.\Eval_{\key}(\intJ, \Enc_{\key}(\rho) \otimes \kb{\ctxt})$ to homomorphically evaluate the interpreter algorithm $\intJ$. If $\rho = \qobf{\mbpf{\alpha}{\beta}}$, $\key = \pk$, and $\ctxt = \cipha$, then this step results in an encryption of $\beta$ with high probability. If $\rho = \qobf{\zf{\lambda}}$, $\key = \pk$, $\ctxt = \cipha$, and $d$ is at least the depth of $\J$, then it results in an encryption of $0^{\lambda}$. Note that we use classical and quantum ciphertexts for the QFHE scheme interchangeably here: see \Cref{sec:prelim-qfhe} for a justification.
	\item Run $\obf$ on the output of the previous step. If $\obf = \obfsk$, this will indicate whether the previous step resulted in a ciphertext for $\beta$ or not.
\end{enumerate}
The above algorithm $A$ will almost certainly output 1 when given an element from $\Cpoint_{\lambda,d}$ for a sufficiently high value of $d$, because of the functional equivalence of the two obfuscations and the correctness of the homomorphic evaluation. Similarly, when given an element from $\Czero_{\lambda,d} - \Cpoint_{\lambda,d}$, it will almost certainly output 0. Formally, for all $\alpha, \beta \in \{0,1\}^{\lambda} - \{0^{\lambda}\}$, and $d$ at least the depth of $\J$,
\begin{align}
\Pr\left[A(\qobf{\mbpf{\alpha}{\beta}}, \pk, \cipha, \obfsk) = 1\right] &\geq 1 - \negl{\lambda},\label{eq:aux-a-on-point-function}\\
\Pr\left[A(\qobf{\zf{\lambda}}, \pk, \cipha, \obfsk) = 1\right] &\leq \negl{\lambda}.\label{eq:aux-a-on-zero-function}
\end{align}
The vastly different output distribution of $A$ when given an obfuscation of a point function versus the zero function are due the fact that $A$ has an actual representation, $\rho$, of the function to feed into the interpreter $\intJ$. In the proof in the next subsection, we will see that a simulator, with only black-box access to these functionality, will not be able to make that distinction.

\subsection{Impossibility proof}\label{sec:aux-impossibility-proof}
We are now ready to state and prove the impossibility theorem for quantum obfuscation of classical circuits with dependent auxiliary input. We reiterate that the two assumptions (quantum FHE and compute-and-compare obfuscation) can be realized under the \LWEWZ{} assumption.

Define $\Cpoint_{\lambda} := \bigcup_{d \in [2^{\lambda}]} \Cpoint_{\lambda,d}$, and similarly $\Czero_{\lambda} := \bigcup_{d \in [2^{\lambda}]} \Czero_{\lambda,d}$.

\begin{theorem}[Impossibility of quantum obfuscation w.r.t.\ auxiliary input]\label{thm:impossiblity-auxiliary}
	Suppose that a classical-client quantum fully homomorphic encryption scheme \QFHE exists that satisfies \Cref{def:qfhe}, and a classical obfuscation procedure \ccobf{\cdot} for compute-and-compare functionalities exists that satisfies \Cref{lem:security-of-CC-obfuscation}. Then any (not necessarily efficient) quantum obfuscator $(\qobf{\cdot},\intJ)$ for the class $\Cpoint_{\lambda} \cup \Czero_{\lambda}$ satisfying conditions 1 (polynomial expansion) and 2 (functional equivalence) from \Cref{def:quantum-vbb} cannot be virtual black-box under the presence of classical dependent auxiliary input, i.e., cannot satisfy condition 3 from \Cref{def:quantum-vbb} where both $\advA$ and $\simS$ get access to a classical string $\aux$ (which may depend on $C$). 
\end{theorem}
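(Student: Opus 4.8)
The plan is to derive a contradiction from \Cref{lem:security-of-CC-with-side-information}, using the adversary $A$ constructed in \Cref{sec:aux-unobfuscatable-class}. First I would observe that $A$ is genuinely QPT: it only runs the QPT interpreter $\intJ$ (homomorphically, via $\QFHE.\Eval$) and the classical, polynomial-size obfuscated circuit $\obf$, all applied to the polynomial-size state $\rho$ (polynomial expansion) and a classical encryption of it. Hence, if $(\qobf{\cdot},\intJ)$ were virtual black-box with respect to dependent auxiliary input, there would exist a QPT simulator $\simS$ with $\bigl|\Pr[A(\qobf{C},\aux)=1]-\Pr[\simS^{C}(1^{|C|},\aux)=1]\bigr|\leq\negl{\lambda}$ for every $C\in\Cpoint_{\lambda}\cup\Czero_{\lambda}$ and every admissible $\aux$. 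Fix $d$ to be the depth of $\intJ$ (which is $\poly{\lambda}$, hence lies in $[2^{\lambda}]$), sample $\alpha,\beta\from_R\{0,1\}^{\lambda}$ and $(\pk,\cipha,\obfsk)\from D_{\lambda}^{\alpha,d}$, and combine the simulator guarantee with \Cref{eq:aux-a-on-point-function,eq:aux-a-on-zero-function}:
\[
\Pr[\simS^{\mbpf{\alpha}{\beta}}(1^{|C|},(\pk,\cipha,\obfsk))=1]\geq 1-\negl{\lambda},\qquad \Pr[\simS^{\zf{\lambda}}(1^{|C|},(\pk,\cipha,\obfsk))=1]\leq\negl{\lambda}.
\]
Note that the same $\aux=(\pk,\cipha,\obfsk)$ is admissible for $\mbpf{\alpha}{\beta}$ (in $\Cpoint_{\lambda,d}$) and for $\zf{\lambda}$ (in $\Czero_{\lambda,d}$), so on input $(1^{|C|},(\pk,\cipha,\obfsk))$ the simulator $\simS$ distinguishes oracle access to $\mbpf{\alpha}{\beta}$ from oracle access to $\zf{\lambda}$ with advantage $1-\negl{\lambda}$. (We assume w.l.o.g.\ that all circuits in the class are padded to a common size, so $1^{|C|}$ does not itself reveal which oracle is in play.)

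Next I would turn this distinguisher into an extractor for $\alpha$. The oracles $\mbpf{\alpha}{\beta}$ and $\zf{\lambda}$ agree on every input except $\alpha$, and the $\zf{\lambda}$ oracle acts as the identity on the query register, so it can be implemented with no secret information. By a standard hybrid argument for quantum query algorithms (the BBBV ``magnitude-of-queries'' lemma, equivalently one-way-to-hiding), if $\simS$ makes $q=\poly{\lambda}$ queries and distinguishes the two oracles with advantage $\varepsilon$, then in the execution $\simS^{\zf{\lambda}}$, measuring the query-input register immediately before a uniformly random one of the $q$ queries yields $\alpha$ with probability $\Omega(\varepsilon^{2}/q^{2})$. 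Let $B$ be the QPT algorithm that, on input $(\pk,\cipha,\obfsk)$, picks $i\from_R[q]$, runs $\simS^{\zf{\lambda}}(1^{|C|},(\pk,\cipha,\obfsk))$ up to its $i$-th query, measures the query-input register, and outputs the outcome. Then $\Pr[B(\pk,\cipha,\obfsk)=\alpha]=\Omega\bigl((1-\negl{\lambda})^{2}/q^{2}\bigr)\geq 1/\poly{\lambda}$. Since $\alpha\from_R\{0,1\}^{\lambda}$ and $(\pk,\cipha,\obfsk)\from D_{\lambda}^{\alpha,d}$ are distributed exactly as in \Cref{lem:security-of-CC-with-side-information}, and since a uniform $\alpha$ (resp.\ the $\beta$ drawn inside $D_{\lambda}^{\alpha,d}$) equals $0^{\lambda}$ only with probability $2^{-\lambda}$ so that enforcing $\alpha,\beta\neq 0^{\lambda}$ in \Cref{eq:aux-a-on-point-function,eq:aux-a-on-zero-function} costs nothing, this contradicts \Cref{lem:security-of-CC-with-side-information} and proves the theorem.

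I expect the main obstacle to be precisely this quantum hybrid/extraction step. In the classical impossibility proof of \cite{BGI+01}, the statement ``the only way to tell $\mbpf{\alpha}{\beta}$ from $\zf{\lambda}$ via oracle calls is to query $\alpha$'' is immediate; here the simulator queries in superposition, so one has to argue quantitatively that a non-negligible distinguishing advantage forces non-negligible query amplitude on $\ket{\alpha}$, and then that this amplitude can be harvested even though the extractor $B$ runs $\simS$ against the trivial zero oracle rather than the unknown point-function oracle (the point being that the two executions coincide up to the first query carrying appreciable weight on $\alpha$). Secondary points needing care are parameter bookkeeping: taking $d$ at least the depth of the homomorphic evaluation of $\intJ$ so that \Cref{eq:aux-a-on-point-function,eq:aux-a-on-zero-function} are applicable, verifying that $A$ is truly QPT so the virtual-black-box property can be invoked against it, and padding the circuit class so that the simulator's input $1^{|C|}$ does not leak which branch it is in.
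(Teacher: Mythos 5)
Your proposal is correct and follows essentially the same route as the paper: invoke the virtual black-box property to reduce to distinguishing oracle access to $\mbpf{\alpha}{\beta}$ from $\zf{\lambda}$ given $\aux$, then use the one-way-to-hiding/BBBV argument (the paper cites \cite[Theorem 3]{AHU19}) to convert any non-negligible distinguishing advantage into an extractor for $\alpha$, contradicting \Cref{lem:security-of-CC-with-side-information}. The only cosmetic difference is that the paper states the O2H bound in the form $2d'\sqrt{\Pr[B=\alpha]}$ and argues the advantage is negligible, whereas you argue the contrapositive with the bound $\Omega(\varepsilon^2/q^2)$; these are the same step.
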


It may seem that the class $\Cpoint_{\lambda} \cup \Czero_{\lambda}$, consisting of point functions, is classically obfuscatable using $\ccobf{\cdot}$ from~\cite{WZ17,GKW17}. That obfuscation is secure if $\alpha$ (wich is the target value if we view $\mbpf{\alpha}{\beta}$ as the multi-bit output compute-and-compare function $\mbcc{\textsf{id}}{\alpha}{\beta}$) is unpredictable given the auxiliary information $\aux = (\pk, \cipha, \obfsk)$. On the surface, that seems to be the case: only an encryption of $\alpha$ is available in the auxiliary information. However, the secret key $\sk$ is present as part of the compute-and-compare function $\cc{\Dec_{\sk}}{\beta}$. That function is obfuscated, but the obfuscation is not secure in the presence of (an obfuscation of) $\mbpf{\alpha}{\beta}$. Thus, the obfuscation result from~\cite{WZ17,GKW17} \emph{almost} applies to the class $\Cpoint_{\lambda} \cup \Czero_{\lambda}$, but not quite. Hence we are able to prove impossibility of obfuscating it, which we do below.

\begin{proof}
	The proof structure is similar to~\cite{BGI+01}, and is by contradiction: assume that a quantum obfuscation $(\qobf{\cdot}, \intJ)$ for the class $\Cpoint_{\lambda} \cup \Czero_{\lambda}$ does exist that satisfies all three conditions. We will show that the output distribution of the algorithm $A$ defined in \Cref{sec:aux-unobfuscatable-class} is approximately the same for every element of the class, contradicting \Cref{eq:aux-a-on-point-function,eq:aux-a-on-zero-function}.
	
	By the assumption of the existence of a secure quantum obfuscation $(\qobf{\cdot}, \intJ)$, there exists a simulator $\simS$ such that
	
	\begin{align}
	\left|
	\Pr[A(\qobf{\mbpf{\alpha}{\beta}}, \aux) = 1]
	-
	\Pr[\simS^{\mbpf{\alpha}{\beta}}(1^{\lambda}, \aux) = 1]\label{eq:aux-qobf-application-point}
	\right|
	&\leq
	\negl{\lambda} \text{, and}
	\\
	\left|
	\Pr[A(\qobf{\zf{\lambda}}, \aux) = 1]
	-
	\Pr[\simS^{\zf{\lambda}}(1^{\lambda}, \aux) = 1]\label{eq:aux-qobf-application-zero}
	\right|
	&\leq
	\negl{\lambda}.
	\end{align}
	Here, the probability is taken over $\alpha \from_R \{0,1\}^{\lambda}$ and $\aux = (\pk, \cipha, \ccobf{\cc{\Dec_\sk}{\beta}}) \from D_{\lambda}^{\alpha,q}$ for $q$ the depth of the interpreter circuit $\J$. Note that $\simS$ does not depend on $\alpha$, $\beta$, $\sk$, or $\pk$.
	
	In the remainder of this proof we show that for any $\simS$ (independent of $\alpha$, $\beta$, $\sk$, $\pk$),
	\begin{align}
	\left|
	\Pr[\simS^{\mbpf{\alpha}{\beta}}(1^{\lambda}, \aux) = 1]
	-
	\Pr[\simS^{\zf{\lambda}}(1^{\lambda}, \aux) = 1]
	\right|
	\leq
	\negl{\lambda},\label{eq:aux-sim-small-difference}
	\end{align}
	from which it can be concluded that
	\begin{align}
	\left|
	\Pr[A(\qobf{\mbpf{\alpha}{\beta}}, \aux) = 1]
	-
	\Pr[A(\qobf{\zf{\lambda}}, \aux) = 1]
	\right|
	\leq
	\negl{\lambda}.\label{eq:aux-a-difference-small}
	\end{align}
	Since \Cref{eq:aux-a-on-point-function,eq:aux-a-on-zero-function} imply that this difference must be at least $1 - \negl{\lambda}$, \Cref{eq:aux-a-difference-small} yields a contradiction.
	
	To show that Equation~\eqref{eq:aux-sim-small-difference} holds, i.e., to bound the difference in output probabilities of $\simS$ when given an oracle for \mbpf{\alpha}{\beta} versus an oracle for \zf{\lambda}, we employ the one-way to hiding theorem as it is stated in~\cite[Theorem 3]{AHU19}. It says that there exists a QPT algorithm $B$ such that
	\begin{align}
	\left|
	\Pr[\simS^{\mbpf{\alpha}{\beta}}(1^{\lambda}, \aux) = 1]
	-
	\Pr[\simS^{\zf{\lambda}}(1^{\lambda}, \aux) = 1]
	\right|
	\leq
	2d' \cdot \sqrt{\Pr[B^{\zf{\lambda}}(1^{\lambda}, \aux) = \alpha]},\label{eq:aux-O2H-application}
	\end{align}
	where $d' = \poly{\lambda}$ is the query depth of $\simS$. However, by \Cref{lem:security-of-CC-with-side-information}, the probability that $B$ outputs $\alpha$ when given the auxiliary information $\aux =  (\pk, \cipha, \obfsk)$ is negligible in $\lambda$. Granting $B$ access to the zero-oracle and the additional input $1^{\lambda}$ does not increase this probability, since the value of $\lambda$ can already be deduced from $\aux$.
	
	We can thus conclude that the difference in Equation~\eqref{eq:aux-O2H-application} is negligible, and Equation~\eqref{eq:aux-sim-small-difference} holds, as desired.
\end{proof}

\noindent We end this section with a few remarks: we describe some variants and generalizations of \Cref{thm:impossiblity-auxiliary} which almost immediately follow from the presented proof.

\begin{remark}
	The proof for \Cref{thm:impossiblity-auxiliary} also works if we replace $\ccobf{\cc{\Dec_{\sk}}{\beta}}$ inside the distributions $D_{\lambda}^{\alpha,d}$ with $\qobf{\cc{\Dec_{\sk}}{\beta}}$, the quantum obfuscation we get from the assumption. This adaptation renders a quantum obfuscator for point functions impossible with respect to dependent auxiliary \emph{quantum} input: a slightly weaker statement, but it does not require the existence of a classical obfuscator for compute-and-compare programs. In particular, the required LWE parameters are better, because we only need the assumption of quantum fully homomorphic encryption.
\end{remark}

\begin{remark}
	Even a quantum obfuscator $(\qobf{\cdot},\intJ)$ for $\Cpoint_{\lambda} \cup \Czero_{\lambda}$ with non-negligible errors in the functional equivalence and/or the virtual-black-box property would lead to a contradiction in the proof of \Cref{thm:impossiblity-auxiliary}. Concretely, let $\varepsilon_f$ denote the error for functional equivalence, and $\varepsilon_s$ denote the error for security in the virtual-black-box sense  (they are both $\negl{|C|} = \negl{\lambda}$ in \Cref{def:quantum-vbb}). The impossibility proof works for any values of $\varepsilon_f, \varepsilon_s$ such that $\varepsilon_f + \varepsilon_s \leq \frac{1}{2} - \frac{1}{\poly{\lambda}}$. So in particular, even a quantum obfuscator with small constant (instead of negligible) errors in both conditions cannot exist.
\end{remark}

\section{Impossibility without Auxiliary Information}\label{sec:no-aux}
In this section, we will show that quantum virtual-black-box obfuscation of classical circuits is impossible even when no auxiliary information is present. We will rely heavily on the class constructed in \Cref{sec:aux}, essentially showing how the auxiliary information can be absorbed into the obfuscated circuit. As a result, the unobfuscatable circuit class itself becomes perhaps less natural, but still consists of classical polynomial-size circuits. %
Thus, our theorem implies impossibility of quantum vbb obfuscation of the class of all efficient classical circuits.

We would like to consider circuits of the following form:
\begin{align}
C_{\alpha,\beta, \aux}(b,x) &:=
\begin{cases}
\aux = (\pk,\cipha,\obfsk) &\text{if } b = 0\\
\mbpf{\alpha}{\beta}(x) &\text{if } b = 1,
\end{cases}
\end{align}
where $(\pk,\cipha,\obfsk)$ is generated from $D_{\lambda}^{\alpha,d}$, as in \Cref{sec:aux}. The input bit $b$ is a choice bit: if it is set to 1, the function \mbpf{\alpha}{\beta} (or \zf{\lambda}) is evaluated on the actual input $x$, whereas if it is set to 0, the auxiliary information is retrieved.

The idea would then be to retrieve the auxiliary information, followed by a homomorphic evaluation of the branch for $b = 1$. There is a problem with this approach, however: since the auxiliary information $\aux$ contains the public evaluation key $\pk$, the circuit $C$ grows with $d$, which affects the length of \pk. But as the circuit grows, a (non-circularly-secure) QFHE scheme may require a larger \pk to perform all evaluation steps.

To get around this issue, the unobfuscatable circuit will generate the public key step-by-step, in a construction inspired by~\cite{CLTV15}. We will assume that the public key of the leveled QFHE scheme is decomposable in the sense of \Cref{def:decomposable-key}.

Given a scheme with a decomposable public key, we redefine the unobfuscatable circuit class as follows. Instead of returning the entire public key at once, the circuit allows the user to request individual blocks $c_i$, up to some depth $d$. An honest user can run the circuit $K+1 = K(d,\lambda)+1$ times to obtain $\pk = \Assemble(c_0, c_1, \dots, c_K)$. The depth $d$ will not be fixed a priori, although it will be (exponentially) upper bounded: the circuit will only be able to handle inputs $i$ where $|i| \leq \lambda$. Thus, only up to $2^{\lambda}$ blocks $c_i$ can be retrieved.

The circuit class we consider in this section consists of circuits of the following form:
\begin{align}
\hat{C}_{\alpha,\beta, d, r, r', \cipha, \obfsk}(b,x) &:=
\begin{cases}
(\cipha,\obfsk) &\text{if } b = 0,\\
\BlockGen(1^{\lambda}, x, r, r') &\text{if } b = 1 \text{ and } x \leq K(d,\lambda),\\
\bot &\text{if } b = 1 \text{ and } x > K(d,\lambda),\\
\mbpf{\alpha}{\beta}(x) &\text{if } b = 2.
\end{cases}
\end{align}
or
\begin{align}
\hat{C}'_{\alpha,\beta, d, r, r', \cipha, \obfsk}(b,x) &:=
\begin{cases}
(\cipha, \obfsk) &\text{if } b = 0,\\
\BlockGen(1^{\lambda}, x, r, r')  &\text{if } b = 1 \text{ and } x \leq K(d,\lambda),\\
\bot &\text{if } b = 1 \text{ and } x > K(d,\lambda),\\
\zf{\lambda}(x) &\text{if } b = 2.
\end{cases}
\end{align}
The first input $b$ is now a choice trit: depending on its value, a different branch of the circuit is executed.

We alter the distribution $D^{\alpha,d}_{\lambda}$ from \Cref{eq:distribution}, so that it does not explicitly generate the public key anymore. That information is now generated on-the-fly by setting $b=1$. The public and secret key are deterministically computed using $r$ to generate the auxiliary information $(\cipha,\obfsk)$ for $b = 0$. Consider the distribution ensemble $\{D^{\alpha,d,r}_{\lambda}\}_{\lambda \in \mathbb{N}}$, where
\begin{align}
(\cipha, \obfsk) \from D_{\lambda}^{\alpha,d,r} 
\quad \text{ as } \quad 
&(\pk, \sk) = \KeyGen(1^{\lambda}, 1^{d}, r), \nonumber\\
&\cipha \from \Enc_{\pk}(\alpha), \nonumber\\
&\beta \from_R \{0,1\}^{\lambda} \, \nonumber\\
&\obfsk \from \cc{\Dec_{\sk}}{\beta}.\label{eq:distribution-prf}
\end{align}
Note that the value of $d$ does not influence the size of $\cipha$ or $\obfsk$ (and thereby the circuit size of $\hat{C}$ and $\hat{C}'$).

We can then define the following parametrized circuit classes:
\begin{align}
\Chpoint_{\lambda,d} := \{\hat{C}_{\alpha,\beta,d,r,r',\cipha,\obfsk} \mid \alpha \in \{0,1\}^{\lambda}, r, r'\in \{0,1\}^{\lambda}, (\cipha,\obfsk) \in \text{supp}(D_{\lambda}^{\alpha,d,r})\},\\
\Chzero_{\lambda,d} := \{\hat{C}'_{\alpha,\beta,d,r,r',\cipha,\obfsk} \mid \alpha \in \{0,1\}^{\lambda}, r,r' \in \{0,1\}^{\lambda}, (\cipha,\obfsk) \in \text{supp}(D_{\lambda}^{\alpha,d,r})\}.
\end{align}
Define the circuit class $\Chpoint_{\lambda} \cup \Chzero_{\lambda}$, where $\Chpoint_{\lambda} := \bigcup_{d \in [2^\lambda]} \Chpoint_{\lambda,d}$ and similarly $\Chzero_{\lambda} := \bigcup_{d \in [2^\lambda]} \Chzero_{\lambda,d}$. Note that in all circuits in this class, the ``auxiliary information'' $(\cipha,\obfsk)$ is fixed. Hence, when the obfuscation of the compute-and-compare function is requested by setting $b = 0$, the circuit always returns the same obfuscation that depends on the same secret key $\sk$.

Similarly to the setting with auxiliary input, there exists a QPT algorithm $A'$ that has significantly different output distributions when given a circuit from $\Chpoint_{\lambda,d}$ versus a circuit from $\Chzero_{\lambda,d}$. Here, we define the algorithm $A'$ that is able to distinguish \emph{only} if it receives a circuit for $d = q$, where $q$ is the depth of the interpreter circuit. If $d < q$, then $A'$ will not be able to retrieve a long enough evaluation key, and will always output zero. However, for our impossibility result, a single value of $d$ on which $A'$ succeeds in distinguishing is sufficient. Note that we cannot define our circuit class to contain only circuits with $d = q$, since $q$ depends on the specific obfuscator/interpreter pair. 

On an input state $\rho$, we define $A'$ as follows:
\begin{enumerate}
	\item Run $\Jr(\rho, \kb{b=0} \otimes \kb{0^{\lambda}})$, where $\Jr$ is the input-recovering version of the interpreter circuit (see \Cref{lem:input-recovery}). If $\rho$ is an obfuscation of a circuit in $\Chpoint \cup \Chzero$, this will result in a state (negligibly close to) $\rho \otimes \kb{\cipha} \otimes \kb{\obfsk}$. Measure the second and third registers to obtain $\cipha$ and $\obfsk$.
	\item Let $q$ be the depth of the interpreter $\J$. Because the interpreter is efficient, $q = \poly{\lambda}$. Sequentially run $\Jr(\rho, \kb{b=1} \otimes \kb{i})$ for all $0 \leq i \leq K = {K(q,\lambda)}$ to obtain $(c_0, c_1, \dots, c_K)$, and compute the public evaluation key $\pk = \Assemble(c_0, c_1, \dots, c_K)$, suitable for homomorphic evaluations of up to depth $q$. Note that the key $\pk$ is only revealed in its entirety if the given circuit has parameter $d = q$.  If $d < q$, $A'$ will notice that $\bot$ is returned for some queries, and outputs 0 at this point.
	\item Run $\QFHE.\Eval_{\pk}(\intJ, \Enc_{\pk}(\rho) \otimes \kb{\Enc_{\pk}(b=2)} \otimes \kb{\cipha})$. Similarly to \Cref{sec:aux-unobfuscatable-class}, this will result in a ciphertext for $\beta$ (if $\rho$ was an obfuscation of a circuit in $\Chpoint_{\lambda}$) or a ciphertext for $0^{\lambda}$ (if $\rho$ was an obfuscation of a circuit in $\Chzero_{\lambda}$), provided that $d = q$.
	\item Run $\obfsk$ on the output of the previous step. Doing so will indicate whether the previous step resulted in a ciphertext for $\beta$ or not. If yes, output 1; otherwise output 0.
\end{enumerate}

Let $(O_Q(\cdot),\J)$ be an obfuscator. The algorithm $A'$, when given a random obfuscated circuit from $\Chpoint_{\lambda,q}$, will almost certainly output 1, where $q$ is the depth of $\J$. At the same time, an element from $\Chzero_{\lambda,q} - \Chpoint_{\lambda,q}$ will almost certainly result in the output 0. More formally, for all $\alpha, r \in \{0,1\}^{\lambda}$ and $d = q$,

\begin{align}
\Pr\left[A'(\qobf{\hat{C}_{\alpha,\beta,d,r,r',\cipha,\obfsk}}) = 1\right] &\geq 1 - \negl{\lambda},\label{eq:a-prime-on-point-function}\\
\Pr\left[A'(\qobf{\hat{C}'_{\alpha,\beta,d,r,r',\cipha,\obfsk}}) = 1\right] &\leq \negl{\lambda}.\label{eq:a-prime-on-zero-function}
\end{align}

The probability is taken over $D^{\alpha,d,r}_{\lambda}$, $r'$, and the internal randomness of $A'$. Compare these inequalities to \Cref{eq:aux-a-on-point-function,eq:aux-a-on-zero-function}.

We are now ready to state our main theorem.

\begin{theorem}[Impossibility of quantum obfuscation]\label{thm:impossibility}
	Suppose that a classical-client quantum fully homomorphic encryption scheme \QFHE exists that satisfies \Cref{def:qfhe,def:decomposable-key}, and a classical obfuscation procedure \ccobf{\cdot} for compute-and-compare functionalities exists that satisfies \Cref{lem:security-of-CC-obfuscation}. Then any (not necessarily efficient) quantum obfuscator $(\qobf{\cdot},\intJ)$ for the class $\Chpoint_{\lambda} \cup \Chzero_{\lambda}$ satisfying conditions 1 (polynomial expansion) and 2 (functional equivalence) from \Cref{def:quantum-vbb} cannot be virtual black-box, i.e., cannot satisfy condition 3 from \Cref{def:quantum-vbb}.
\end{theorem}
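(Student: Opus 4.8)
The plan is to mirror the proof of \Cref{thm:impossiblity-auxiliary}, exploiting the fact that the ``auxiliary information'' has been folded into the circuit. Assume, for contradiction, that $(\qobf{\cdot},\intJ)$ is a quantum obfuscator for $\Chpoint_{\lambda}\cup\Chzero_{\lambda}$ satisfying all three conditions of \Cref{def:quantum-vbb}, and let $q=\poly{\lambda}$ be the depth of the interpreter $\intJ$. The QPT distinguisher $A'$ of \Cref{sec:no-aux} then admits, by the virtual-black-box property, a QPT simulator $\simS$; applying this to the (legitimate) members $\hat C = \hat C_{\alpha,\beta,q,r,r',\cipha,\obfsk}\in\Chpoint_{\lambda,q}$ and $\hat C' = \hat C'_{\alpha,\beta,q,r,r',\cipha,\obfsk}\in\Chzero_{\lambda,q}$ yields, writing $1^{|\hat C|}$ for their common size (which is $\poly{\lambda}$ and, crucially, independent of $d$ by decomposability),
\begin{align*}
\Bigl|\Pr[A'(\qobf{\hat C}) = 1] - \Pr[\simS^{\hat C}(1^{|\hat C|}) = 1]\Bigr| &\le \negl{\lambda},\\
\Bigl|\Pr[A'(\qobf{\hat C'}) = 1] - \Pr[\simS^{\hat C'}(1^{|\hat C|}) = 1]\Bigr| &\le \negl{\lambda}.
\end{align*}
So it suffices to show that no such $\simS$ can distinguish oracle access to $\hat C$ from oracle access to $\hat C'$, i.e. that $\bigl|\Pr[\simS^{\hat C}(1^{|\hat C|})=1]-\Pr[\simS^{\hat C'}(1^{|\hat C|})=1]\bigr|\le\negl{\lambda}$; combined with the two displays this forces $\bigl|\Pr[A'(\qobf{\hat C})=1]-\Pr[A'(\qobf{\hat C'})=1]\bigr|\le\negl{\lambda}$, contradicting \Cref{eq:a-prime-on-point-function,eq:a-prime-on-zero-function}, which assert that this difference is at least $1-\negl{\lambda}$.

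Next I would observe that $\hat C$ and $\hat C'$ are identical on the branches $b=0$ and $b=1$, and on $b=2$ they compute $\mbpf{\alpha}{\beta}$ and $\zf{\lambda}$ respectively, which agree on every input except $x=\alpha$. Hence the two oracles differ on the single input $(b{=}2,\alpha)$, and the one-way to hiding theorem (\cite[Theorem~3]{AHU19}), applied exactly as in the proof of \Cref{thm:impossiblity-auxiliary}, gives a QPT extractor $B$ with
\[
\Bigl|\Pr[\simS^{\hat C}(1^{|\hat C|}) = 1] - \Pr[\simS^{\hat C'}(1^{|\hat C|}) = 1]\Bigr| \;\le\; 2d' \cdot \sqrt{\Pr\bigl[B^{\hat C'}(1^{|\hat C|}) = \alpha\bigr]},
\]
where $d'=\poly{\lambda}$ is the query depth of $\simS$ and the probability is over $\alpha\from_R\{0,1\}^{\lambda}$, $r,r'\from_R\{0,1\}^{\lambda}$, $(\cipha,\obfsk)\from D^{\alpha,q,r}_{\lambda}$, and $B$'s randomness.

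It then remains to bound $\Pr[B^{\hat C'}(1^{|\hat C|})=\alpha]$ by $\negl{\lambda}$. The key point is that $B$'s entire view --- oracle access to $\hat C'$ --- can be emulated, up to computational indistinguishability, by a QPT machine given only $(\pk,\cipha,\obfsk)\from D^{\alpha,q}_{\lambda}$ from \Cref{eq:distribution}: queries with $b=0$ are answered with $(\cipha,\obfsk)$; queries with $b=1$ are answered with blocks produced by the decomposable-key simulator of \Cref{def:decomposable-key} on input $\pk$, which by its simulatability clause are jointly indistinguishable from the honest blocks $\BlockGen(1^{\lambda},i,r,r')$ appearing in $\hat C'$ (here we use that $\pk$ from the triple equals $\Assemble$ of those honest blocks, both arising from $\KeyGen(1^{\lambda},1^{q},r)$); and queries with $b=2$ are answered with $\zf{\lambda}$, which carries no information about $\alpha$. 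Therefore $\Pr[B^{\hat C'}(1^{|\hat C|})=\alpha]$ is, up to $\negl{\lambda}$, at most the probability that a QPT algorithm outputs $\alpha$ given $(\pk,\cipha,\obfsk)\from D^{\alpha,q}_{\lambda}$, which is $\negl{\lambda}$ by \Cref{lem:security-of-CC-with-side-information}. Plugging back, $\bigl|\Pr[\simS^{\hat C}(1^{|\hat C|})=1]-\Pr[\simS^{\hat C'}(1^{|\hat C|})=1]\bigr|\le 2d'\sqrt{\negl{\lambda}}=\negl{\lambda}$, completing the contradiction.

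I expect the main obstacle to be this oracle-emulation step: unlike in \Cref{thm:impossiblity-auxiliary}, where the distinguisher's extra input was a fixed string $\aux$, here the oracle's $b=1$ branch hands out the evaluation key block by block, and one must argue --- via the simulatability property of decomposable public keys --- that this on-the-fly leakage is no more useful than receiving $\pk$ outright, which is precisely what licenses the reduction to \Cref{lem:security-of-CC-with-side-information}. A secondary point requiring care is bookkeeping: one must check that $|\hat C|$, the simulator's runtime, the query depth $d'$, and the number of blocks $K(q,\lambda)$ are all $\poly{\lambda}$ uniformly in $d$ --- exactly what the decomposable-key formalism of \Cref{sec:qfhe-decomposable} buys us --- so that every ``negligible'' is negligible in $\lambda$. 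Finally, as in the remarks following \Cref{thm:impossiblity-auxiliary}, the argument tolerates non-negligible correctness and virtual-black-box errors $\varepsilon_f,\varepsilon_s$ with $\varepsilon_f+\varepsilon_s\le\frac12-\frac1{\poly{\lambda}}$, since the gap between \Cref{eq:a-prime-on-point-function} and \Cref{eq:a-prime-on-zero-function} stays close to $1$.
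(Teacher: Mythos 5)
Your proposal is correct and uses exactly the same ingredients as the paper's proof: the VBB simulator, the simulatability clause of decomposable public keys to replace the block-by-block oracle with $\pk$ itself, the one-way-to-hiding theorem, and \Cref{lem:security-of-CC-with-side-information}. The only (cosmetic) difference is the order of the two reductions --- the paper first converts $\simS_0$ into a simulator with a point-function oracle plus auxiliary input $(\cipha,\obfsk,\pk)$ (via \Cref{cor:oracle-to-input}) and then reuses Equation~\eqref{eq:aux-sim-small-difference}, whereas you apply O2H directly to the oracles $\hat C,\hat C'$ and perform the oracle emulation on the extractor $B$ instead; both orderings are sound.
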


\begin{corollary}\label{cor:main-result}
	If the \LWEWZ{} assumption holds, the class of classical polynomial-size circuits cannot be quantum virtual-black-box obfuscated in the sense of \Cref{def:quantum-vbb}.
\end{corollary}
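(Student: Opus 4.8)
The plan is to follow the proof of \Cref{thm:impossiblity-auxiliary}, with the auxiliary string now delivered through the oracle instead of as a separate input; in particular, the argument will not use efficiency of $\qobf{\cdot}$. Assume for contradiction that a quantum obfuscator $(\qobf{\cdot},\intJ)$ for $\Chpoint_{\lambda}\cup\Chzero_{\lambda}$ satisfies all three conditions of \Cref{def:quantum-vbb}, let $q=\poly{\lambda}$ be the depth of $\intJ$ (so $q<2^{\lambda}$ for large $\lambda$, and $\Chpoint_{\lambda,q},\Chzero_{\lambda,q}$ lie inside the obfuscated class), and let $A'$ be the distinguisher from \Cref{sec:no-aux}, whose behaviour is recorded in \Cref{eq:a-prime-on-point-function,eq:a-prime-on-zero-function} (these in turn rely on reusing the obfuscation state $\rho$ via \Cref{lem:input-recovery}). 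Condition~3 yields a QPT simulator $\simS$ with $\bigl|\Pr[A'(\qobf{C})=1]-\Pr[\simS^{C}(1^{|C|})=1]\bigr|\le\negl{|C|}=\negl{\lambda}$ for every $C$ in the class. Since \Cref{eq:a-prime-on-point-function,eq:a-prime-on-zero-function} hold for all $\alpha$ and $r$, I would average everything over $\alpha,r\from_R\{0,1\}^{\lambda}$ — a uniform seed $r$ reproduces $\KeyGen(1^{\lambda},1^{q})$ by the w.l.o.g.\ convention following \Cref{def:levfhe} — and over $\beta,r'$; it then suffices to show, for $\hat{C}=\hat{C}_{\alpha,\beta,q,r,r',\cipha,\obfsk}$ and its zero-branch twin $\hat{C}'$,
\begin{equation*}
\bigl|\Pr[\simS^{\hat{C}}(1^{|C|})=1]-\Pr[\simS^{\hat{C}'}(1^{|C|})=1]\bigr|\le\negl{\lambda},
\end{equation*}
since combining this with the two virtual-black-box approximations and with \Cref{eq:a-prime-on-point-function,eq:a-prime-on-zero-function} forces $1-\negl{\lambda}\le\negl{\lambda}$.

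To prove the displayed bound I would use the one-way-to-hiding lemma exactly as in \Cref{thm:impossiblity-auxiliary}. The functions $\hat{C}$ and $\hat{C}'$ agree on every input except $(b,x)=(2,\alpha)$, so \cite[Theorem~3]{AHU19} yields a QPT extractor $B$ — which runs $\simS$ up to a uniformly chosen query, measures that query, and outputs it — such that the left-hand side above is at most $2d'\sqrt{\Pr[B^{\hat{C}'}(1^{|C|})=(2,\alpha)]}$, where $d'=\poly{\lambda}$ is the query depth of $\simS$ and the oracle randomness has been pulled out of the square root by Jensen's inequality. Since $2d'$ is polynomial, it remains to show $\Pr[B^{\hat{C}'}(1^{|C|})=(2,\alpha)]\le\negl{\lambda}$ over uniform $\alpha,r,\beta,r'$.

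The oracle $\hat{C}'$ answers the branch $b=0$ with $(\cipha,\obfsk)$, the branch $b=1$ with the blocks $c_i=\BlockGen(1^{\lambda},i,r,r')$ (or $\bot$ for $i>K(q,\lambda)$), and the branch $b=2$ with $0^{\lambda}$; so $B$ gets nothing useful from $b=2$, learns $(\cipha,\obfsk)$, and can recover all $K(q,\lambda)+1=\poly{\lambda}$ blocks and hence $\pk=\Assemble(c_0,\dots,c_{K})$. I would bound the extraction probability in two steps. \emph{Step 1:} replace the real blocks in $B$'s oracle by the output of the decomposable-key simulator $\mathcal{S}_{\mathsf{dec}}(1^{\lambda},\pk)$ from the \textbf{Simulatability} clause of \Cref{def:decomposable-key}, run once and queried coordinate-wise; by that clause this changes $B$'s success probability by only $\negl{\lambda}$. \emph{Step 2:} the resulting oracle can now be emulated given only the triple $(\pk,\cipha,\obfsk)$, so $B$ collapses to a QPT algorithm $B''$ on input $(\pk,\cipha,\obfsk)\from D_{\lambda}^{\alpha,q}$ with the same success probability, and \Cref{lem:security-of-CC-with-side-information} gives $\Pr[B''(\pk,\cipha,\obfsk)=\alpha]\le\negl{\lambda}$. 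Chaining the inequalities completes the proof of \Cref{thm:impossibility}; \Cref{cor:main-result} then follows because every $\hat{C},\hat{C}'$ is a classical circuit of size $\poly{\lambda}$ (the lengths of $\cipha,\obfsk$ and the running time of $\BlockGen$ are independent of $d$, and inputs $i$ are capped at $\lambda$ bits), while under the \LWEWZ{} assumption a classical-client QFHE scheme with decomposable public keys and a compute-and-compare obfuscator \ccobf{\cdot} both exist.

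The step I expect to be the main obstacle is Step~1. In the auxiliary-input setting of \Cref{thm:impossiblity-auxiliary} the extractor merely holds a fixed string $\aux=(\pk,\cipha,\obfsk)$, whereas here the oracle $\hat{C}'$ exposes the block-generation functionality and hence quantities that functionally depend on the secret seeds $r,r'$; one must argue that this extra access reveals nothing about $\alpha$ beyond $\pk$. That is precisely what the decomposable-public-key abstraction is designed to provide, but making Step~1 rigorous requires care: the block-simulatability of \Cref{def:decomposable-key} must be invoked jointly with the side information $(\cipha,\obfsk)$ (for the bootstrapped instantiation of \Cref{sec:decomposable-bootstrapped} this is immediate, since the simulator there simply reads the real blocks out of $\pk$, so the real and simulated oracles coincide, but in general it needs a routine non-uniform hybrid), and one must line up the uniform seed $r$ with fresh $\KeyGen$ randomness, the fixed-$r'$ blocks hard-wired in the circuit with the uniform-$r'$ blocks in \Cref{def:decomposable-key}, and handle the possibility that $B$'s one-way-to-hiding query is a superposition — checking throughout that each occurrence of ``$\negl{|C|}$'' is genuinely negligible in $\lambda$ because $|C|=\poly{\lambda}$ and $|C|\ge\lambda$.
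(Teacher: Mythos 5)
Your proposal is correct, and it rests on exactly the same ingredients as the paper: the distinguisher $A'$ built from input recovery and decomposable keys, the one-way-to-hiding theorem, the simulatability clause of \Cref{def:decomposable-key}, and \Cref{lem:security-of-CC-with-side-information}; the corollary itself then follows, as you say, because $\hat{C},\hat{C}'$ are polynomial-size classical circuits and \LWEWZ{} supplies both primitives. The only structural difference is the order of the hybrids. The paper first converts the simulator's access to the circuit oracle $\hat{C}$ into access to a bare point-function oracle plus the side information $(\cipha,\obfsk,\pk)$ (\Cref{lem:classical-choice-oracle} and \Cref{cor:oracle-to-input}), and then reuses the already-proved bound \eqref{eq:aux-sim-small-difference} from \Cref{thm:impossiblity-auxiliary}, so O2H is only ever applied to the point-function oracle. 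You instead apply O2H directly to the full oracles $\hat{C}$ versus $\hat{C}'$ (which still differ only at $(2,\alpha)$) and perform the oracle-to-input conversion on the resulting extractor $B$ before invoking \Cref{lem:security-of-CC-with-side-information}. The two routes are interchangeable; yours is marginally more self-contained (it does not need \Cref{thm:impossiblity-auxiliary} as a stepping stone), while the paper's modularization isolates the decomposability argument in an appendix lemma. Your flagged obstacle --- that block-simulatability must hold jointly with $(\cipha,\obfsk)$, and that $r$ must be averaged to line up with honest $\KeyGen$ randomness before \Cref{lem:security-of-CC-with-side-information} applies --- is real, and is handled no more explicitly in the paper than in your sketch.
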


\begin{proof}[Proof of \Cref{thm:impossibility}]
	We again prove the statement by contradiction, assuming that there does exist an obfuscator $(\qobf{\cdot}, \intJ)$ that securely obfuscates $\Chpoint_{\lambda} \cup \Chzero_{\lambda}$. Let $q$ be the depth of $\J$, so that $K(q,\lambda)$ is the number of blocks $c_i$ of the evaluation key required by $A'$ to successfully distinguish between an element of $\Chpoint_{\lambda}$ and of $\Chzero_{\lambda}$.
	
	By the assumption that $(\qobf{\cdot},\intJ)$ is secure, there must exist a simulator $\simS_0$ such that for all $\alpha, r \in \{0,1\}^{\lambda}$ (and setting $d = q$),
	\begin{align}
	\left|
	\Pr[A'(\qobf{\hat{C}_{\alpha,\beta,q,r,r',\cipha,\obfsk}}) = 1]
	-
	\Pr[\simS_0^{\hat{C}_{\alpha,\beta,q,r,r',\cipha,\obfsk}}(1^{\lambda}) = 1]
	\right|
	&\leq
	\negl{\lambda},\\
	\left|
	\Pr[A'(\qobf{\hat{C}'_{\alpha,\beta,q,r,r',\cipha,\obfsk}}) = 1]
	-
	\Pr[\simS_0^{\hat{C}'_{\alpha,\beta,q,r,r',\cipha,\obfsk}}(1^{\lambda}) = 1]
	\right|
	&\leq
	\negl{\lambda}.
	\end{align}
	The probabilities are taken over $(\cipha, \obfsk) \from D_{\lambda}^{\alpha, d, r}$ and $r' \from_R \{0,1\}^{\lambda}$, and the internal randomness of $A'$ and $\simS_0$.
	
	The output distribution of $\simS_0$ can be exactly simulated by another simulator, $\simS_1$, that has access only to an oracle for $\mbpf{\alpha}{\beta}$ or $\zf{\lambda}$, and gets the auxiliary information $\pk$, $\cipha$, and $\obfsk$ as input. $\simS_1$ can simply run $\simS_0$, simulating each oracle query using its own oracle, auxiliary input, or a combination thereof. If (part of) the query of $\simS_0$ is for some block $c_i$, $\simS_1$ can use the decomposability of $\pk$ to compute the individual blocks. We formally show the existence of such an $\simS_1$ in \Cref{cor:oracle-to-input}.
	
	We can thus conclude that for all $\alpha, r \in \{0,1\}^{\lambda}$,
	\begin{align}
	\left|
	\Pr[A'(\qobf{\hat{C}_{\alpha,\beta,q,r,r',\cipha,\obfsk}}) = 1]
	-
	\Pr[\simS_1^{\mbpf{\alpha}{\beta}}(1^{\lambda}, \cipha,\obfsk, \pk) = 1]
	\right|
	&\leq
	\negl{\lambda},\\
	\left|
	\Pr[A'(\qobf{\hat{C}'_{\alpha,\beta,q,r,r',\cipha,\obfsk}}) = 1]
	-
	\Pr[\simS_1^{\zf{\lambda}}(1^{\lambda}, \cipha,\obfsk, \pk) = 1]
	\right|
	&\leq
	\negl{\lambda}.
	\end{align}
	Again, the probabilities are over $D_{\lambda}^{\alpha,d,r}$ and $r'$, $A'$, and $\simS_1$.
	
	However, by Equation~\eqref{eq:aux-sim-small-difference} in the proof of \Cref{thm:impossiblity-auxiliary}, the output distribution of $\simS_1$ can only differ negligibly between the two different oracles. Thus, we have
	\begin{align}
	\left|
	\Pr[A'(\qobf{\hat{C}_{\alpha,\beta,q,r,r',\cipha,\obfsk}}) = 1]
	-
	\Pr[A'(\qobf{\hat{C}'_{\alpha,\beta,q,r,r',\cipha,\obfsk}}) = 1]
	\right|
	\leq
	\negl{\lambda}.
	\end{align}
	This contradicts our observation in \Cref{eq:a-prime-on-point-function,eq:a-prime-on-zero-function} that on input $\hat{C}_{\alpha,\beta,q,r,r',\cipha,\obfsk}$, $A'$ will almost always output 1, whereas on input $\hat{C}'_{\alpha,\beta,q,r,r',\cipha,\obfsk}$, it will almost always output 0.
\end{proof}

\section*{Acknowledgements}
We thank Andrea Coladangelo, Urmila Mahadev and Alexander Poremba for useful discussions.
CS is supported by a NWO VIDI grant (Project No.\ 639.022.519). 
GA acknowledges support from the NSF under grant CCF-1763736, from the U.S. Army Research Office under Grant Number W911NF-20-1-0015, and from the U.S. Department of Energy under Award Number DE-SC0020312.
ZB is supported by the Binational Science Foundation (Grant No.\ 2016726), and by the European Union Horizon 2020 Research and Innovation Program via ERC Project REACT (Grant 756482) and via Project PROMETHEUS (Grant 780701).
Part of this work was done while the authors were attending \href{https://simons.berkeley.edu/programs/quantum2020}{The Quantum Wave in Computing} at the Simons Institute for the Theory of Computing.

\bibliographystyle{alphaarxiv}
\bibliography{references}

\appendix
\section{Proof of \Cref{lem:input-recovery}}\label{app:proof-input-recovery}
	
	\begin{proof}
		The input-recovering circuit \Crec will consist of running $C$ coherently, copying out the output register, and reverting the coherent computation of $C$. Suppose the circuit $C$ contains $k$ measurement gates, $\ell$ initializations of wires in the $\ket{0}$ state, and outputs of length $n$. Define \Crec as:
		\begin{enumerate}
			\item Run $U_C$ on input $\rho_\inp^{\reg{A_1}} \otimes \kb{0^{\ell}}^{\reg{A_2}} \otimes \kb{0^k}^{\reg{M}}$, where $U_C$ is the unitary that coherently executes $C$, $A = (A_1, A_2)$ is a register that contains the actual input and the auxiliary input $\ket{0}$ states for $C$, and $M$ is the register that contains the auxiliary wires for the coherent measurements.
			\item Copy the wires that are supposed to contain the output $C(\rho_\inp)$ into a register $Y$, initialized to $\kb{0^n}$, using $\CNOT$s. The source of the $\CNOT$s is a register $O$, the subregister of $A$ containing those output wires. Write $\overline{O}$ for the registers in $A$ that are not in $O$ (these wires are normally discarded after the execution of $C$). \label{step:obf:measure}
			\item Run $U_C^{\dagger}$ to recover the original input, and discard the registers $A_2$ and $M$.
		\end{enumerate}
		The behavior of $\Crec$ can be summarized as
		\begin{align}
		\Crec(\rho_{\inp}) = \Tr_{A_2M} \left[U_C^{\dagger} \CNOT^{\otimes n}_{O, Y} \left(U_C\left(\rho_{\inp}^{\reg{A_1}} \otimes \kb{0^{\ell + k}}^{\reg{A_2M}}\right)U_C^{\dagger} \otimes \kb{0}^{\reg{Y}}\right)\left(\CNOT^{\otimes n}_{O,Y}\right)^{\dagger} U_C\right].
		\end{align}
		To see that \Crec acts as promised, let $\rho_{\inp}$, $x$, and $\varepsilon$ be such that $\trnorm{C(\rho_{\inp}) - \kb{x}} \leq \varepsilon$. If $\varepsilon$ is small, the \CNOT in Step~\ref{step:obf:measure} does not create a lot of entanglement, since the control wires are (close to) the computational-basis state $\kb{x}$. The output is therefore (almost) perfectly copied out.
		
		More formally, note that $C(\rho_{\inp}) = \Tr_{\overline{O}M}\left[U_C(\rho_{\inp} \otimes \kb{0^{\ell + k}})U_C^{\dagger}\right]$. By Lemma A.1 in~\cite{ABC+19}, the closeness of $C(\rho_{\inp})$ and $\kb{x}$ implies that there exists a density matrix $\chi^{\reg{\overline{O}M}}$ such that
		\begin{align}
		\frac{1}{2} \trnorm{U_C(\rho_{\inp} \otimes \kb{0^{\ell + k}})U_C^{\dagger} \ \ - \ \  \kb{x}\otimes \chi^{\reg{\overline{O}M}}} \leq \sqrt{\varepsilon}.\label{eq:obf:sqrtdiff}
		\end{align}
		Next, we use the fact that a quantum map cannot increase the trace distance between two states to derive two inequalities from \Cref{eq:obf:sqrtdiff}.
		
		For the first inequality, we append $\kb{x}$ on both sides (into a separate $Y$ register):
		\begin{align}
		\frac{1}{2}\trnorm{U_C(\rho_{\inp} \otimes \kb{0^{\ell + k}})U_C^{\dagger} \otimes \kb{x}^{\reg{Y}} \ \ - \ \  \kb{x}\otimes \chi^{\reg{\overline{O}M}} \otimes \kb{x}^{\reg{Y}}} \leq \sqrt{\varepsilon}.\label{eq:obf:first-ineq}
		\end{align}
		
		For the second inequality, we instead append $\kb{0}$ into the $Y$ register, followed by $\CNOT$s from $O$ onto $Y$. Note that on the second term inside the trace norm, the effect is the same as before:
		\begin{align}
		\frac{1}{2}\trnorm{\CNOT^{\otimes n}_{O,Y}\left(U_C(\rho_{\inp} \otimes \kb{0^{\ell + k}})U_C^{\dagger} \otimes \kb{0}^{\reg{Y}}\right)\left(\CNOT^{\otimes n}_{O,Y}\right)^{\dagger} \ \ - \ \  \kb{x}\otimes \chi^{\reg{\overline{O}M}} \otimes \kb{x}^{\reg{Y}}} \leq \sqrt{\varepsilon}.\label{eq:obf:second-ineq}
		\end{align}
		Thus, by the triangle inequality, the left-hand terms inside the trace norms in \Cref{eq:obf:first-ineq,eq:obf:second-ineq} are $2\sqrt{\varepsilon}$-close. Applying the map $\Tr_{A_2M}\left[U_C^{\dagger} (\cdot) U_C\right]$ to both terms, which again does not increase the trace difference, we arrive at the desired statement:
		\begin{align}
		\frac{1}{2}\trnorm{\left(\rho_{\inp} \otimes \kb{x}\right) \ \ - \ \ \Crec(\rho_{\inp})} \leq 2\sqrt{\varepsilon}.
		\end{align}
	\end{proof}

\section{Auxiliary Lemmas for \Cref{thm:impossibility}}

\begin{lemma}\label{lem:classical-choice-oracle}
	Let $g : \{0,1\}^m \to \{0,1\}^n$ for $m, n \in \mathbb{N}$, and let $c \in \{0,1\}^n$. Let $f : \{0,1\} \times \{0,1\}^m \to \{0,1\}^n$ be defined by
	\begin{align}
	f(b,x) := 
	\begin{cases}
	c    &\text{if } b = 0\\
	g(x) &\text{if } b = 1.
	\end{cases}
	\end{align}
	Then for every QPT $A$, there exists a simulator $S$ such that for all $f, g$ of the form described above, and all input states $\rho$:
	\begin{align}
	\Pr[A^f(\rho) = 1] \ \ = \ \ \Pr[S^g(\rho, c) = 1].
	\end{align}
	
\end{lemma}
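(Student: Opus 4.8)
The plan is to observe that a single superposition query to $f$ can be implemented \emph{exactly} using one query to $g$ together with the classical string $c$. Given this, the simulator $S$ simply runs $A$ step by step on the input state $\rho$, emulating each of $A$'s oracle calls, and outputs whatever $A$ outputs; since the emulation is perfect, the two output distributions coincide with no error term.

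Concretely, recall that a query to $f$ is the unitary $U_f \colon \ket{b}\ket{x}\ket{y} \mapsto \ket{b}\ket{x}\ket{y \oplus f(b,x)}$ acting on a one-qubit choice register, an $m$-qubit input register, and an $n$-qubit output register. Writing $X_c := \bigotimes_{i=1}^{n} \X^{c_i}$ for the operation $\ket{y}\mapsto\ket{y\oplus c}$ on the output register, and $U_g \colon \ket{x}\ket{y}\mapsto\ket{x}\ket{y\oplus g(x)}$ for a query to $g$, the definition of $f$ yields the identity
\[
U_f \;=\; \ketbra{0}{0}\otimes\bigl(I\otimes X_c\bigr)\;+\;\ketbra{1}{1}\otimes U_g,
\]
i.e.\ $U_f$ is the map that, controlled on the choice qubit being $0$, applies $X_c$ to the output register (independently of $x$, since $f(0,x)=c$), and, controlled on the choice qubit being $1$, applies $U_g$ to the input/output registers (since $f(1,x)=g(x)$).

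The simulator $S$, on input $(\rho,c)$ and with oracle access to $g$, therefore does the following: it runs $A$ verbatim on $\rho$, and whenever $A$ would call its oracle it instead applies the operation on the right-hand side above, on $A$'s designated query registers. Conditioning on the choice qubit and applying $\X^{c_i}$ to the $i$th output wire uses only the classical string $c$, which $S$ has; the controlled application of $U_g$ costs exactly one query to $g$. Thus every emulated query equals $U_f$, so the joint state of $A$ (including all of $A$'s internal and ancilla registers) after every step of the computation is identical whether $A$ interacts with the real oracle $f$ or is run inside $S$; in particular the distribution of $A$'s final output bit is unchanged, giving $\Pr[A^f(\rho)=1]=\Pr[S^g(\rho,c)=1]$. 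Finally $S$ is QPT, incurring only $O(n)$ extra gates and one $g$-query per query of $A$, and it depends only on the description of $A$ — the same $S$ works for every $g$ and every $c$.

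There is essentially no hard step here; the only points requiring (minor) care are fixing the standard query model for $A$ so that each oracle call is a single unitary of the form $U_f$, and noting that the decomposition of $U_f$ invokes $c$ purely classically (on the control value of the choice qubit), so the reduction is uniform over all admissible $f$ and $c$.
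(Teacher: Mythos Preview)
Your approach is correct and is essentially the paper's: simulate each of $A$'s $U_f$-queries using the string $c$ together with calls to $U_g$, so that the emulation is perfect and the output distributions coincide. One small point to flag: your decomposition $U_f = \ketbra{0}{0}\otimes(I\otimes X_c) + \ketbra{1}{1}\otimes U_g$ is of course correct as an identity, but implementing it requires a \emph{controlled} application of $U_g$, and in the oracle model used here the simulator is only handed the uncontrolled unitary $\ket{x}\ket{z}\mapsto\ket{x}\ket{z\oplus g(x)}$. The paper's proof makes this explicit by spending \emph{two} $g$-queries per $f$-query --- it writes $g(x)$ into an ancilla, conditionally XORs the ancilla (or $c$) into the output register depending on $b$, and then uncomputes the ancilla with a second $g$-call --- whereas you assert that ``the controlled application of $U_g$ costs exactly one query to $g$'' without justification. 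This does not affect the lemma (all that is needed is that $S$ be QPT, and two queries per call is fine), but the one-query claim is not obviously true in this model and should either be argued or replaced by the two-query compute/uncompute construction.
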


\begin{proof}
	Recall that since $A$ and $S$ are quantum algorithms, they access their oracles in superposition: that is, $A$ has access to the map defined by $\ket{x}\ket{z} \mapsto \ket{x} \ket{z \oplus f(x)}$, and $S$ has access to the map defined by $\ket{x}\ket{z} \mapsto \ket{x} \ket{z \oplus g(x)}$.
	The simulator $S$ runs $A$ on input $\rho$, and simulates any oracle calls to $f$ (on inputs registers $BX$ and output register $Z$) using two oracle calls to $g$. It only needs to prepare an auxiliary register in the state $\ket{0^n}$, and run the following circuit:
	\[
	\Qcircuit @C=1em @R=1em {
		\lstick{B}         & \qw                & \ctrl{3} & \gate{\X} & \ctrl{3}  & \gate{\X} & \qw                & \qw \\
		\lstick{X}         & \multigate{1}{g} & \qw      & \qw       & \qw       & \qw       & \multigate{1}{g} & \qw \\
		\lstick{\ket{0^n}} & \ghost{g}        & \ctrl{1} & \qw       & \qw       & \qw       & \ghost{g}        & \qw \\
		\lstick{Z}         & \qw                & \targ    & \qw       & \targ     & \qw       & \qw                & \qw \\
		\lstick{\ket{c}}   & \qw                & \qw      & \qw       & \ctrl{-1} & \qw       & \qw                & \qw
	}
	\]
	To see that this circuit exactly simulates a query to $f$ on $BXZ$, consider an arbitrary query state
	\begin{align}
	\sum_i \alpha_i \ket{b_i, x_i}_{BX}\ket{z_i}_{Z}\ket{\varphi_i}_R,
	\end{align}
	where $R$ is some purifying register. The state on $BXZR$ (plus the two auxiliary registers containing $\ket{0^n}$ and $\ket{c}$) after the above circuit is executed, is equal to
	\begin{align}
	&\sum_i \alpha_i \ket{b_i, x_i}_{XB} \ket{0^n} \ket{z_i \oplus b \cdot g(x_i) \oplus (1-b) \cdot c}_Z \ket{c} \ket{\varphi_i}_R\\
	=&\sum_i \alpha_i \ket{b_i, x_i}_{XB} \ket{0^n} \ket{z_i \oplus f(x_i)}_Z \ket{c} \ket{\varphi_i}_R,
	\end{align}
	which is exactly the state that would result from a direct query to $f$.
\end{proof}

\begin{corollary}\label{cor:oracle-to-input}
	Let $\Chpoint_{\lambda}$ and $q$ be as in \Cref{sec:no-aux}. Then for any QPT $\simS_0$, there exists a QPT simulator $\simS_1$ such that for all $\alpha,r \in \{0,1\}^{\lambda}$,
	\begin{align}
	\left|
	\Pr[\simS_0^{\hat{C}_{\alpha,\beta,q,r,r',\cipha,\obfsk}}(1^{\lambda}) = 1] - 
	\Pr[\simS_1^{\mbpf{\alpha}{\beta}}(1^{\lambda}, \cipha,\obfsk, \pk) = 1]
	\right| \leq \negl{\lambda}.
	\end{align}
	A similar statement holds for circuits from $\Chzero_{\lambda}$.
\end{corollary}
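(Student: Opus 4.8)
The plan is to let $\simS_1$ run $\simS_0$ as a black box, intercepting each query $\simS_0$ makes to $\hat{C}_{\alpha,\beta,q,r,r',\cipha,\obfsk}$ and answering it using only the classical advice $(\cipha,\obfsk,\pk)$ it is handed and its own oracle for $\mbpf{\alpha}{\beta}$. The key observation is that, once $q$, $r$, $r'$ are fixed, the three branches of $\hat{C}$ are: a \emph{constant} $(\cipha,\obfsk)$ on $b=0$; the \emph{deterministic} map $x \mapsto \BlockGen(1^{\lambda},x,r,r')$ (and $\bot$ for $x > K(q,\lambda)$) on $b=1$; and $\mbpf{\alpha}{\beta}(x)$ on $b=2$. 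The $b=0$ branch $\simS_1$ knows outright; the $b=2$ branch it has as an oracle; and for the $b=1$ branch, although $\simS_1$ does not know $r,r'$, it can invoke the decomposability simulator $\simS_{\mathrm{dec}}$ of \Cref{def:decomposable-key} on $(1^{\lambda},\pk)$ to obtain a list $(\tilde c_0,\dots,\tilde c_K)$ that is computationally indistinguishable from the real block list $(c_0,\dots,c_K)$, and answer $b=1$ queries from this list (the list length also pins down $K = K(q,\lambda)$, so $\simS_1$ need not know $q$ separately; the unused codomain point of the choice register is handled by whatever fixed rule $\hat{C}$ uses there).

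\textbf{Answering a query.} On a superposition query with choice-trit register $B$, input register $X$ and output register $Z$ (all branches padded to the common output width of $\hat{C}$), $\simS_1$ controls on $B$: for $B=0$ it XORs the fixed string $(\cipha,\obfsk)$ into $Z$; for $B=1$ it applies the classical reversible circuit computing $x \mapsto \tilde c_x$ (or $\bot$) into $Z$; and for $B=2$ it writes $\mbpf{\alpha}{\beta}(x)$ into a fresh $\ket{0}$-ancilla with one oracle call, XORs that ancilla into $Z$ controlled on $B=2$, and uncomputes the ancilla with a second oracle call — the two-query gadget of \Cref{lem:classical-choice-oracle}. Each simulated query costs $O(1)$ oracle calls and $\poly{\lambda}$ local work, and $\simS_0$ makes $\poly{\lambda}$ queries, so $\simS_1$ is QPT; it outputs whatever $\simS_0$ outputs.

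\textbf{Analysis.} Consider first the hybrid variant of $\simS_1$ that is handed the \emph{true} blocks $c_x = \BlockGen(1^{\lambda},x,r,r')$ in place of the $\simS_{\mathrm{dec}}$-generated ones: by construction every simulated query is then a perfect $\hat{C}_{\alpha,\beta,q,r,r',\cipha,\obfsk}$-query, so this variant's output distribution is \emph{identical} to that of $\simS_0^{\hat{C}_{\alpha,\beta,q,r,r',\cipha,\obfsk}}(1^{\lambda})$. It remains to bound the change caused by replacing the true blocks with the simulated ones, and here we apply the simulatability clause of \Cref{def:decomposable-key}, namely $\simS_{\mathrm{dec}}(1^{\lambda},\pk) \capprox (c_0,\dots,c_K)$ for every $d$ and $r$: for each fixed $\alpha,\beta,r$ we build a distinguisher with $(\alpha,\beta,r)$ hard-wired that computes $(\pk,\sk) = \KeyGen(1^{\lambda},1^q;r)$, $\cipha \from \Enc_{\pk}(\alpha)$, $\obfsk \from \ccobf{\cc{\Dec_{\sk}}{\beta}}$ itself, receives a block list (real, with random $r'$, or $\simS_{\mathrm{dec}}$-generated), runs $\simS_0$ exactly as $\simS_1$ does with that list, and outputs $\simS_0$'s bit; its distinguishing advantage is precisely the quantity to be bounded and is $\negl{\lambda}$. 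Averaging over $\beta$, $r'$ and the internal coins of $\simS_0$ and of $D_{\lambda}^{\alpha,q,r}$ then gives the corollary, and the $\Chzero_{\lambda}$ statement is word-for-word the same with $\zf{\lambda}$ replacing $\mbpf{\alpha}{\beta}$.

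\textbf{Main obstacle.} The delicate point is the legitimacy of this last reduction: the decomposability distinguisher needs to additionally hold $(\cipha,\obfsk)$ and be able to evaluate $\mbpf{\alpha}{\beta}$, and one must check this does not break the block indistinguishability. This is exactly why \Cref{def:decomposable-key} quantifies simulatability over \emph{all} $r$ (hence all induced $(\pk,\sk)$): a non-uniform distinguisher may be given $(\alpha,\beta,r)$, from which it manufactures $\cipha$, $\obfsk$ and the point-function oracle on its own, so none of that side information can help it distinguish the blocks. The only other care required is the coherent two-query simulation of the $b=2$ branch and the bookkeeping that all three branches write into a single output register of the correct padded width — both routine given \Cref{lem:classical-choice-oracle}.
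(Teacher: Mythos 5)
Your proposal is correct and follows essentially the same route as the paper: first a hybrid simulator that perfectly answers all three branches using the constant $(\cipha,\obfsk)$, the true block list, and the two-query gadget of \Cref{lem:classical-choice-oracle} for the point-function branch (the paper packages this as an intermediate simulator $\simS_2$ obtained by repeated application of that lemma), and then a swap of the true blocks for the ones produced by the decomposability simulator of \Cref{def:decomposable-key}. Your extra discussion of why the reduction to block-simulatability may hard-wire $(\alpha,\beta,r)$ and regenerate $(\cipha,\obfsk)$ itself is a valid and slightly more explicit justification of a step the paper leaves implicit.
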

\begin{proof}
	The statement is proven via an intermediate simulator $\simS_2$. This simulator is constructed by repeated application of \Cref{lem:classical-choice-oracle}, so that for all $\alpha,r$,
	\begin{align}
	\left|
	\Pr[\simS_0^{\hat{C}_{\alpha,\beta,q,r,r',\cipha,\obfsk}}(1^{\lambda}) = 1] - 
	\Pr[\simS_2^{\mbpf{\alpha}{\beta}}(1^{\lambda}, \cipha,\obfsk, c_0, c_1, c_2, \dots, c_K, \bot) = 1]
	\right| \leq \negl{\lambda},
	\end{align}
	where $K = K(q,\lambda)$ as in \Cref{def:decomposable-key}. 
	On the right-hand side, the probability is additionally over a random choice of $r'$ (resulting in the sequence $(c_0, c_1, c_2, \dots, c_K)$).
	
	Next, we apply the simulatability property of \Cref{def:decomposable-key}. It states that there exists a simulator $\simS_3$ that, given a public key, can generate the distribution over $(c_0, c_1, c_2, \dots, c_K)$ itself. Define
	\begin{align}
	\simS_1^{\mbpf{\alpha}{\beta}}(1^{\lambda}, \cipha,\obfsk, \pk) := \simS_2^{\mbpf{\alpha}{\beta}}(1^{\lambda}, \cipha,\obfsk, \simS_3(\pk), \bot),
	\end{align}
	and the corollary follows.
\end{proof}

\end{document}